\renewcommand{\qed}{\hfill\ding{113}}
\newcommand{\PTL}{\ensuremath{\mathsf{LTL}}}
\newcommand{\coreLTL}{\ensuremath{\PTL_{\textit{core}}}}
\newcommand{\kromLTL}{\ensuremath{\PTL_{\textit{krom}}}}
\newcommand{\boolLTL}{\ensuremath{\PTL_{\textit{bool}}}}
\newcommand{\hornLTL}{\ensuremath{\PTL_{\textit{horn}}}}
\newcommand{\Xallop}{^{\Box,\raisebox{1pt}{$\scriptscriptstyle\bigcirc$}}}
\newcommand{\Xbox}{^\Box}
\newcommand{\Xu}{^{\ooalign{\hss\raisebox{1pt}{$\scriptscriptstyle*$}\hss\cr$\scriptstyle\Box$}}}
\newcommand{\iref}[1]{\textup{\bfseries{#1}}}
\newcommand{\NLogSpace}{\textsc{NLogSpace}}
\newcommand{\PTime}{\textsc{PTime}}
\newcommand{\NP}{\textsc{NP}}
\newcommand{\PSpace}{\textsc{PSpace}}
\newcommand{\SVbox}{\mathop{\ooalign{$\Box$ \cr \kern0.42ex
    \raisebox{0.35ex}{\scalebox{0.7}{$*$}}}\rule{0pt}{1.5ex} \kern-0.7ex}}
\newcommand{\SVdiamond}{\mathop{\ooalign{$\Diamond$ \cr \kern0.5ex
    \raisebox{0.35ex}{\scalebox{0.7}{$*$}}} \kern-0.9ex}}
\newcommand{\nxt}{{\ensuremath\raisebox{0.25ex}{\text{\scriptsize$\bigcirc$}}}}
\newcommand{\Rnext}{\nxt_{\!\scriptscriptstyle F}}
\newcommand{\Lnext}{\nxt_{\!\scriptscriptstyle P}}
\newcommand{\Rdiamond}{\Diamond_{\!\scriptscriptstyle F}}
\newcommand{\Ldiamond}{\Diamond_{\!\scriptscriptstyle P}}
\newcommand{\Rbox}{\Box_{\!\scriptscriptstyle F}}
\newcommand{\Lbox}{\Box_{\!\scriptscriptstyle P}}
\newcommand{\Until}{\mathbin{\mathcal{U}}}
\newcommand{\Since}{\mathbin{\mathcal{S}}}
\newcommand{\K}{\mathfrak{K}_\varphi}
\newcommand{\M}{\mathfrak{M}}
\newcommand{\Z}{\mathbb{Z}}
\newcommand{\der}{\mathfrak{d}}
\newcommand{\PP}{{\scriptscriptstyle P}}
\newcommand{\FF}{{\scriptscriptstyle F}}
\title{The Complexity of Clausal Fragments of LTL}
\author{A.~Artale,\!$^1$ R.~Kontchakov,\!$^2$ V.~Ryzhikov,\!$^1$ and  M.~Zakharyaschev$^2$}
\institute{
 \begin{minipage}[t]{52mm} \centering
   $^1$ KRDB Research Centre\\
   Free University of Bozen-Bolzano\\
   I-39100 Bolzano, Italy\\
   \texttt{\{artale,ryzhikov\}@inf.unibz.it}
 \end{minipage}
 \hfill
 \begin{minipage}[t]{68mm} \centering
   $^2$ Dept.\ of Computer Science and Inf.~Systems\\
   Birkbeck, University of London\\
   London WC1E 7HX, UK\\
   \texttt{\{roman,michael\}@dcs.bbk.ac.uk}
 \end{minipage}
}
\authorrunning{A. Artale, R. Kontchakov, V. Ryzhikov and M. Zakharyaschev}
\begin{document}

\maketitle

\begin{abstract}

We introduce and investigate a number of fragments of propositional temporal logic \PTL{} over the flow of time $(\Z,<)$. The fragments are defined in terms of the available temporal operators and the structure of the clausal normal form of the temporal formulas. We determine the computational complexity of the satisfiability problem  for each of the fragments, which ranges from \NLogSpace{} to \PTime, \NP{} and \PSpace.

\end{abstract}

\section{Introduction}\label{sec:2ltl}

We consider the (\PSpace-complete) propositional temporal logic \PTL{} over the flow of time $(\mathbb Z, <)$. Our aim is to investigate how the computational complexity of the satisfiability problem for \PTL-formulas depends on the form of their clausal representation and the available temporal operators. 

Sistla and Clarke~\cite{SistlaClarke82} showed that satisfiability of \PTL{}-formulas with all standard operators (`next-time', `always in the future', `eventually' and `until') is \PSpace-complete; see also~\cite{HalpernR81,DBLP:conf/lop/LichtensteinPZ85}. Ono and Nakamura~\cite{OnoNakamura80} proved that for formulas with only `always in the future' and `eventually' the satisfiability problem becomes \NP-complete. 
Since then a number of fragments of \PTL{} of different complexity have been identified.
For example, Chen and Lin~\cite{ChenLin93} observed that the complexity  
does not change if we restrict attention to  temporal Horn formulas. Demri and Schnoebelen~\cite{DBLP:journals/iandc/DemriS02} determined the complexity of fragments that depend on three parameters: the available temporal operators, the number of nested temporal operators, and the number of propositional variables in formulas. Markey~\cite{Markey04} analysed fragments defined  by the allowed set of temporal operators, their nesting and the use of negation.
Dixon \emph{et al.}~\cite{DBLP:conf/ijcai/DixonFK07} 
introduced a XOR fragment of \PTL{} and showed its tractability. Bauland~\emph{et~al.}~\cite{DBLP:journals/corr/abs-0812-4848} systematically investigated the complexity of fragments given by  both temporal operators and Boolean connectives (using Post's lattice of sets of Boolean functions).

In this paper, we classify temporal formulas according to their clausal normal form. Recall~\cite{DBLP:conf/ijcai/Fisher91} that any \PTL-formula over $(\mathbb{N},<)$ can be transformed into an equisatisfiable formula in the so-called \emph{separated normal form} that consists of initial clauses (setting  
conditions at moment 0), step clauses (defining transitions  between consecutive states), and eventuality clauses  (defining the states that must be reached infinitely often). Our clausal normal form is a slight generalisation of the separated normal form. 
The main building blocks are \emph{positive temporal literals} $\lambda$ given by the following grammar:
\begin{equation}\label{lambda}
\lambda \ \ ::= \ \ \bot \ \ \mid \ \ p   \ \ \mid \ \  \Rnext \lambda \ \ \mid \ \  \Lnext \lambda  \ \ \mid \ \  \Rbox \lambda \ \ \mid \ \  \Lbox \lambda \ \ \mid \ \  \SVbox \lambda,
\end{equation} 
where $p$ is a propositional variable, $\Rnext$ and $\Lnext$ are the next- and previous-time operators, and $\Rbox$, $\Lbox$, $\SVbox$  are the operators `always in the future,\!' `always in the past' and  
`always.\!' We say that a temporal formula $\varphi$ is in \emph{clausal normal form} if
\begin{equation}\label{normal}
\varphi \ \  ::= \ \ \lambda \ \ \mid \ \ \neg \lambda  \ \ \mid \ \  \SVbox (\neg \lambda_1 \lor \dots \lor \neg \lambda_n \lor
\lambda_{n+1} \lor \dots \lor \lambda_{n+m})  \ \ \mid \ \  \varphi_1 \land \varphi_2. 
\end{equation}
Conjunctions of positive and \emph{negative} ($\neg\lambda$) literals can be thought of as initial clauses, while conjunctions of $\SVbox$-formulas generalise both step and eventuality clauses of the separated normal form. Similarly to~\cite{DBLP:journals/tocl/FisherDP01} one can show  that any \PTL-formula over $(\Z,<)$ is equisatisfiable to a formula in clausal normal form.

We consider twelve fragments of \PTL{} that will be denoted by $\PTL_\alpha\Xallop$, $\PTL_\alpha\Xbox$ and $\PTL_\alpha\Xu$, for $\alpha \in \{\textit{bool},\textit{horn},\textit{krom},\textit{core}\}$. The superscript in the language name indicates the temporal operators that can be used in its positive literals. Thus, $\PTL_\alpha\Xallop$ uses all types of positive literals, $\PTL_\alpha\Xbox$ can only use  the $\Box$-operators: 
\begin{equation*}
\lambda \ \ ::= \ \ \bot \ \ \mid \ \ p   \ \ \mid \ \  
\Rbox \lambda \ \ \mid \ \  \Lbox \lambda \ \ \mid \ \  \SVbox \lambda,
\end{equation*}
and $\PTL_\alpha\Xu$ only the $\SVbox$-operator:
\begin{equation*}
\lambda \ \ ::= \ \ \bot \ \ \mid \ \ p   \ \ \mid \ \  
\SVbox \lambda. 
\end{equation*}
The subscript $\alpha$ in the language name refers to the form of the clauses 
\begin{equation}\label{clause}
\neg \lambda_1 \lor \dots \lor \neg \lambda_n \lor
\lambda_{n+1} \lor \dots \lor \lambda_{n+m} 
\end{equation}
($m,n\ge 0$) that can be used in the formulas $\varphi$:
\begin{itemize}
\item \textit{bool}-clauses are arbitrary clauses of the form~\eqref{clause},
\item \textit{horn}-clauses have at most one positive literal (that is, $m\leq 1$),
\item \textit{krom}-clauses are binary (that is, $n + m\leq 2$),
\item \textit{core}-clauses are binary with at most one positive literal ($n + m\leq 2$, $m \leq 1$). 
\end{itemize}
The tight complexity bounds in Table~\ref{table:PTL:languages}  show how the complexity of the
satisfiability problem for \PTL-formulas depends on the 
form of clauses and the available temporal operators.
\begin{table}[ht]
\centering\renewcommand{\arraystretch}{1.5}
\begin{tabular}{cccc}\toprule
temporal operators & $\SVbox, \Rbox, \Lbox$, $\raisebox{2pt}{\tiny$\bigcirc$}\hspace{-0.15em}_{\scriptscriptstyle F}, \raisebox{2pt}{\tiny$\bigcirc$}\hspace{-0.15em}_{\scriptscriptstyle P}$ &
$\SVbox, \Rbox, \Lbox$ & $\SVbox$  \\
$\alpha$ & $\PTL_\alpha\Xallop$ & 
$\PTL_\alpha\Xbox$ & $\PTL_\alpha\Xu$ \\\midrule
$\textit{bool}$ & \PSpace{} {\scriptsize($\leq$~\cite{SistlaClarke82})}  
& \NP{}  & \NP \\
$\textit{horn}$ & \PSpace{}  {\scriptsize($\geq$~\cite{ChenLin93})} 
& \PTime{} {\scriptsize $[\leq$ Th.~\ref{thm:hornLTL}$]$} & \PTime \\
$\textit{krom}$ & \NP{} {\scriptsize $[\leq$ Th.~\ref{lem:bin-ltl:krom-diamond-next-np}$]$} 
& \NP{} {\scriptsize $[\geq$ Th.~\ref{krom-low-NP}$]$} & \NLogSpace{}  {\scriptsize $[\leq$ Th.~\ref{thm:kromLTL}$]$} \\
$\textit{core}$ 
& \NP{} {\scriptsize $[\geq$ Th.~\ref{lem:bin-ltl:core-diamond}$]$} & \NLogSpace{} {\scriptsize $[\le$Th.~\ref{newstuff}$]$} & \NLogSpace \\\bottomrule
\end{tabular}\\[6pt]
\caption{The complexity of clausal fragments of \PTL.}
\label{table:PTL:languages}
\end{table}
The \PSpace{} upper bound for $\boolLTL\Xallop$ is
well-known~\cite{HalpernR81,SistlaClarke82,Rabi:10,Rey:10}; the
matching lower bound can be obtained already for $\hornLTL\Xallop$
without $\Rbox$ and $\Lbox$ by a standard encoding of deterministic
Turing machines with polynomial tape~\cite{ChenLin93}.  The \NP{}
upper bound for $\boolLTL\Xbox$ is also well-known~\cite{OnoNakamura80}, and the \PTime{} and \NLogSpace{} lower bounds for $\hornLTL\Xu$ and
$\coreLTL\Xu$ coincide with the complexity of the respective non-temporal languages. 

The main contributions of this paper are the remaining complexity
results in Table~\ref{table:PTL:languages}.  Note first that the
complexity of the $\PTL_\alpha\Xu$ fragments coincides with that of
the underlying propositional fragments.  The complexity of the
$\PTL_\alpha\Xbox$ fragments matches the complexity of the underlying non-temporal fragments except for the Krom case,
where we can use the clauses $\neg p \lor \neg \Rbox q$ and $q \lor r$
to say that $p \to \Rdiamond r$ (if $p$ then eventually $r$), which allows one to encode 3-colourability and results in \NP-hardness. It is known that the addition of
the operators $\Rnext$ and $\Lnext$ to the language with $\Rbox$ and
$\Lbox$ usually increases the complexity (note that the proofs of the
lower bounds for the $\PTL_\alpha\Xallop$ fragments require only
$\SVbox$ and $\Rnext$). It is rather surprising that this does not
happen in the case of the Krom fragment, while the complexity of the
corresponding core fragment jumps from \NLogSpace{} to \NP.

We prove the upper bounds using three different techniques. In Section~\ref{sec:cnf}, we reduce satisfiability in
$\kromLTL\Xu$ to \textsc{2SAT}.  The
existence of models for $\kromLTL\Xallop$-formulas is checked in
Section~\ref{binary} by guessing a small number of types and
exponentially large distances between them (given in binary) and
then using unary automata (and the induced arithmetic progressions) to
verify correctness of the guess in polynomial time. In
Section~\ref{sec:nlogspace}, we design a calculus for
$\coreLTL\Xbox$ in which derivations can be thought of as paths in a 
graph over the propositions labelled by moments of time. Thus, the existence of such 
derivations is essentially the graph reachability problem and can be solved
in \NLogSpace.


\section{The Clausal Normal Form for LTL}
\label{sec:cnf}

The \emph{propositional linear-time temporal logic} \PTL{} (see,
e.g.,~\cite{Gabbayetal94,GKWZ03} and references therein) we consider
in this paper is interpreted over the flow of time \mbox{$(\mathbb{Z},<)$}. \PTL-\emph{formulas} are built from propositional variables
$p_0,p_1,\dots$, propositional constants $\top$ and $\bot$, the
Boolean connectives $\land$, $\lor$, $\to$ and $\neg$, and two binary
temporal operators $\Since$ (`since') and $\Until$ (`until'), which are assumed to be `strict.\!'\
So, the other temporal operators mentioned in the introduction can be defined via $\Since$ and $\Until$ as follows:
\begin{align*}
&\Rnext \varphi = \bot \Until \varphi, && \Rdiamond \varphi = \top \Until \varphi, && \Rbox \varphi = \neg \Rdiamond \neg \varphi, && \SVdiamond \varphi = \Ldiamond \Rdiamond \varphi, \\
&\Lnext \varphi = \bot \Since \varphi, && \Ldiamond \varphi = \top \Since \varphi, && \Lbox \varphi = \neg \Ldiamond \neg \varphi, && \SVbox \varphi = \Lbox \Rbox \varphi.
\end{align*}
A \emph{temporal interpretation}, $\M$, defines a truth-relation
between moments of time $n \in \Z$ and propositional variables
$p_i$. We write $\M,n \models p_i$ to indicate that $p_i$ is true at
the moment $n$ in the interpretation $\M$. This truth-relation is
extended to all \PTL-formulas as follows (the Booleans are interpreted as
expected):
\begin{align*}
& \M,n \models \varphi \Until \psi \, \text{ iff } \, \text{there is } k > n \text{ with } \mathfrak{M},k\models \psi \text{ and } \mathfrak{M},m\models\varphi, \text{ for } n < m < k,\\
& \mathfrak{M},n\models \varphi\Since \psi \, \text{ iff } \,  \text{there is } k < n \text{ with } \mathfrak{M},k\models \psi \text{ and } \mathfrak{M},m\models\varphi, \text{ for } k < m < n.
\end{align*}
%
An \PTL-formula $\varphi$ is \emph{satisfiable} if there
is an interpretation $\M$ such that $\M,0 \models \varphi$; in this
case we call $\M$ a \emph{model} of $\varphi$. We denote the length of $\varphi$ by $|\varphi|$. 

Recall that \PTL-formulas of the form \eqref{normal} were said to be in \emph{clausal normal form}, and the class of such formulas was denoted by $\boolLTL\Xallop$. The clauses~\eqref{clause} will often be  represented as $\lambda_1 \land \dots \land \lambda_n \to \lambda_{n+1} \lor \dots \lor \lambda_{n+m}$ (where the empty disjunction is $\bot$ and the empty conjunction is $\top$). 

\begin{lemma}[clausal normal form]\label{lem:clausal:nf}
For every \PTL-formula, one can construct an equisatisfiable $\boolLTL\Xallop$-formula. The construction requires logarithmic space.
\end{lemma}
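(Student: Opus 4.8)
The plan is to follow the standard structure-preserving (Tseitin-style) renaming procedure, adapted to the temporal setting and to the strict $\Since/\Until$ primitives. First I would put the input formula $\varphi$ into negation normal form, pushing all negations down to the level of propositional variables using the usual Boolean dualities together with the temporal duality $\neg(\psi\Until\chi)$, expressed via the operators above; this does not blow up the size by more than a constant factor. Next, I would introduce a fresh propositional variable $p_\psi$ for every subformula $\psi$ of the (now NNF) formula, and build a conjunction of "defining" formulas of the shape $\SVbox(p_\psi \leftrightarrow \ast)$ that mimic the recursive truth conditions: for Boolean subformulas this is purely propositional and immediately expressible by \textit{bool}-clauses under $\SVbox$; for $\psi = \psi_1\Until\psi_2$ and $\psi=\psi_1\Since\psi_2$ one eliminates the binary operators in favour of the unary ones available in~\eqref{lambda}. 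The top-level conjunct is just $p_\varphi$.

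The one genuinely nontrivial point — and the step I expect to be the main obstacle — is expressing strict $\Until$ (and dually $\Since$) using only the positive literals permitted by~\eqref{lambda}, i.e. $\Rnext$, $\Lnext$, $\Rbox$, $\Lbox$, $\SVbox$, with no $\Rdiamond$ or $\Since/\Until$ surviving in the literals themselves. The trick is the standard fixpoint unfolding combined with a fresh "witness" variable. For $\psi = \psi_1 \Until \psi_2$ I would introduce $p_\psi$ and an auxiliary $q_\psi$, and assert, under $\SVbox$: (i) $p_\psi \to \Rnext(p_{\psi_2} \lor q_\psi)$; (ii) $q_\psi \to \Rnext(p_{\psi_1} \land (p_{\psi_2}\lor q_\psi))$ — wait, more carefully, $q_\psi$ should encode "$\psi_1$ holds here and $\psi$-style obligation continues", so: $q_\psi \to p_{\psi_1}$, $q_\psi \to \Rnext(p_{\psi_2}\lor q_\psi)$, and $p_\psi \to \Rnext(p_{\psi_2}\lor q_\psi)$; (iii) the eventuality: $p_\psi \to \Rdiamond p_{\psi_2}$, which by the argument in the introduction ($\neg a \lor \neg\Rbox b$ together with $b\lor c$ encoding $a\to\Rdiamond c$) is itself rendered by clauses with literals $\Rbox p_{\psi_2}$ — permissible since $\Rbox$ is in~\eqref{lambda}. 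The converse direction, $\Rnext(p_{\psi_2}\lor q_\psi) \land (\text{eventuality}) \to p_\psi$ and the matching implication for $q_\psi$, is needed to make $p_\psi$ actually equivalent to $\psi$ and not merely implied by it; each such implication is again a clause over the allowed literals. Strict $\Since$ is handled symmetrically with $\Lnext$ and $\Lbox$.

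With all defining formulas of the shape $\SVbox(\dots)$ where each disjunct is a literal $\lambda$ or $\neg\lambda$ from~\eqref{lambda}, the whole conjunction is syntactically a $\boolLTL\Xallop$-formula, so it only remains to argue (a) equisatisfiability and (b) the logspace bound. For (a), a model of the original $\varphi$ is expanded to a model of the renamed formula by interpreting each $p_\psi$ as the truth set of $\psi$; conversely, any model of the renamed formula satisfies $p_\varphi$ at $0$, and an induction on subformula structure — using that $(\Z,<)$ has no endpoints, so the $\Rnext$/$\Lnext$ unfoldings and the eventuality clauses together pin down $\Until/\Since$ exactly — shows $\M,0\models\varphi$. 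For (b), the construction is entirely local: one pass to compute NNF, then for each of the $O(|\varphi|)$ subformula occurrences one emits a fixed-size gadget referring to the (logspace-indexable) names $p_\psi, q_\psi$; this uses only counters and pointers into the input, hence logarithmic work space. I would note finally that the number of fresh variables and the size of the output are linear in $|\varphi|$, so the reduction is also polynomial, as one would expect.
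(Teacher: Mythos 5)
Your proposal is essentially the paper's own argument: renaming of temporal subformulas combined with fixed-point unfolding of $\Until$/$\Since$ into $\Rnext$/$\Lnext$-clauses plus an eventuality clause, with $\Rdiamond$ simulated by a negated $\Rbox$-literal and a fresh variable. The only real difference is that the paper renames by polarity and keeps just the one-directional implications (for a positive occurrence of $p\Until q$: $\SVbox(r\to\Rnext q\lor\Rnext p)$, $\SVbox(r\to\Rnext q\lor\Rnext r)$, $\SVbox(r\to\Rdiamond q)$), which spares it both the NNF preprocessing and the awkward converse clauses with $\Rdiamond$ in the antecedent that your biconditional version requires; your variant is sound but heavier.
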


The proof of this lemma is similar to the proof of~\cite[Theorem~3.3.1]{DBLP:journals/tocl/FisherDP01} and uses fixed-point unfolding and renaming~\cite{DBLP:journals/tocl/FisherDP01,Plaisted86}. For example, we can replace every positive occurrence (that is, an occurrence in the scope of an even number of negations) of $p \Until q$ in a given formula $\varphi$  with a fresh propositional variable $r$ and add the conjuncts $\SVbox(r \to \Rnext q \lor \Rnext p)$, $\SVbox(r \to \Rnext q \lor \Rnext r)$ and $\SVbox(r \to \Rdiamond q)$ to $\varphi$. 
The result contains no positive occurrences of $p \Until q$ and is equisatisfiable with $\varphi$: the first two conjuncts are the fixed-point unfolding  $(p \Until q) \to \Rnext q \lor \bigl(\Rnext p \land \Rnext (p \Until q)\bigr)$, while the last conjunct ensures that the fixed-point is eventually reached. 

The next lemma allows us to consider an even more restricted classes of
formulas. In what follows, we do not distinguish between a set of formulas and the
conjunction of its members, and we write $\SVbox \Phi$ for
the conjunction $\bigwedge_{\chi\in\Phi}\SVbox \chi$.

\begin{lemma}\label{restricted}
Let $\mathcal{L}$ be one of $\PTL_\alpha\Xallop$, $\PTL_\alpha\Xbox$, $\PTL_\alpha\Xu$, for $\alpha \in \{\textit{bool},\textit{horn},\textit{krom},\textit{core}\}$. For any $\mathcal{L}$-formula $\varphi$, one can construct, in log-space, an equisatisfiable $\mathcal{L}$-formula
\begin{equation}\label{eq:kromltl:input}
\Psi\ \land \ \SVbox \Phi,
\end{equation}
where $\Psi$ is a conjunction of propositional variables from $\Phi$, 
and $\Phi$ is a conjunction of clauses of the form~\eqref{clause} containing only
$\Rnext$, $\Lbox$, $\Rbox$ for $\PTL_\alpha\Xallop$, only $\Lbox$, $\Rbox$
for $\PTL_\alpha\Xbox$, and only $\SVbox$ for $\PTL_\alpha\Xu$, in which the temporal operators are not nested.
\end{lemma}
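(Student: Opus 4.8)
The plan is to put $\varphi$ through three successive log-space rewritings, each preserving both equisatisfiability and membership in $\mathcal{L}$; by Lemma~\ref{lem:clausal:nf} we may assume $\varphi$ is already of the form~\eqref{normal}, i.e.\ a conjunction whose conjuncts are literals $\lambda$, negated literals $\neg\lambda$, and formulas $\SVbox(C)$ with $C$ a clause of the form~\eqref{clause}. \textbf{Step 1 (flattening).} First I would remove all nesting of temporal operators: while some literal still contains a subterm $T(T'\mu)$ with $T,T'\in\{\Rnext,\Lnext,\Rbox,\Lbox,\SVbox\}$ and $\mu$ a propositional variable, pick an innermost such $T'\mu$, take a fresh variable $r$, replace every occurrence of $T'\mu$ by $r$, and add the conjunct $\SVbox(r\leftrightarrow T'\mu)$, i.e.\ the two clauses $\SVbox(\neg r\lor T'\mu)$ and $\SVbox(\neg T'\mu\lor r)$ (subterms $T\bot$ being first simplified to $\bot$, which is sound over $(\Z,<)$). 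Since $\SVbox(r\leftrightarrow T'\mu)$ functionally defines $r$ from $\mu$, this is sound wherever $T'\mu$ occurs, and each new clause is binary with exactly one positive literal and uses only $T'$, so it lies in $\mathcal{L}$ for every $\alpha$. After Step~1 every temporal literal is a single operator applied to a variable.

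\textbf{Step 2 (initial conjuncts into $\Psi$).} By Step~1, every conjunct of $\varphi$ other than the $\SVbox(C)$ ones is $\bot$, a propositional variable, a literal $Tp$, or a negation $\neg\lambda$ of one of these. A variable conjunct goes into $\Psi$ unchanged; a conjunct $\bot$ makes $\varphi$ unsatisfiable, so I output a fixed unsatisfiable $\mathcal{L}$-formula. For a conjunct $Tp$ I take a fresh $p_0$, put $p_0$ in $\Psi$ and $\SVbox(\neg p_0\lor Tp)$ in $\Phi$; for a conjunct $\neg\lambda$ I take a fresh $q_0$, put $q_0$ in $\Psi$ and $\SVbox(\neg q_0\lor\neg\lambda)$ in $\Phi$. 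Each new clause has at most two literals, at most one of them positive, and uses only the operator already occurring in $\lambda$, so it stays in $\mathcal{L}$; and since $p_0,q_0$ are forced only at moment $0$ while the added implications are global, equisatisfiability is preserved. Now $\varphi$ has the shape $\Psi\land\SVbox\Phi$, but the clauses of $\Phi$ may still use $\Lnext$ and $\SVbox$.

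\textbf{Step 3 (eliminating $\Lnext$ and $\SVbox$ from $\Phi$).} These are the only operators the statement forbids in $\Phi$ but $\mathcal{L}$ may allow inside literals; for $\PTL_\alpha\Xu$ there is nothing to do, and $\PTL_\alpha\Xbox$ has no $\Lnext$. Over $(\Z,<)$, requiring a variable $q$ to be true at $n$ exactly when $p$ is true at $n-1$ is equivalent to requiring, for all $n$, that $p$ is true at $n$ exactly when $q$ is true at $n+1$, i.e.\ to $\SVbox(p\leftrightarrow\Rnext q)$; so I replace each literal $\Lnext p$ by a fresh $q$ and add $\SVbox(\neg p\lor\Rnext q)$ and $\SVbox(\neg\Rnext q\lor p)$, which use only $\Rnext$. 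For $\SVbox$: $\SVbox p$ is true at $n$ iff $p$ is true at every integer, and if a variable $t$ is true at exactly the points where $\Rbox p$ is true, then $\Lbox t$ is true at $n$ iff $\Rbox p$ is true at every $m<n$, which again holds iff $p$ is true at every integer; so I replace each literal $\SVbox p$ by $\Lbox t$ for a fresh $t$ and add $\SVbox(\neg t\lor\Rbox p)$ and $\SVbox(\neg\Rbox p\lor t)$, which use only $\Rbox$ and $\Lbox$. In both cases the added clauses are binary with at most one positive literal, hence in every $\alpha$-class; they use only operators permitted in $\mathcal{L}$; the fresh variable is pinned down uniquely in every $\Z$-model, so the substitution is valid regardless of polarity; and no new nesting is created. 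Carrying both replacements out exhaustively produces a formula of the required form~\eqref{eq:kromltl:input}.

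All three steps are syntax-directed, produce output of size linear in $|\varphi|$, and name fresh variables by positions within $\varphi$, so the whole construction runs in logarithmic space. I expect Step~3 to be the only genuine obstacle, specifically keeping the clauses binary in the Krom and core cases: the naive identity $\SVbox p\equiv p\land\Rbox p\land\Lbox p$ would turn the renaming of an antecedent occurrence of $\SVbox p$ into a ternary clause, and routing $\SVbox p$ through $\Lbox t$ with $t\leftrightarrow\Rbox p$ is exactly what avoids this. The rest is the routine but unavoidable bookkeeping that each clause introduced above lies in the same one of the classes $\textit{bool},\textit{horn},\textit{krom},\textit{core}$ as the clauses of $\varphi$ and uses only the temporal operators available in $\mathcal{L}$.
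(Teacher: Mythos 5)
Your construction is correct and is essentially the paper's own proof: guard the initial conjuncts of $\varphi$ by fresh variables placed in $\Psi$ with globally quantified binary core clauses, unnest temporal operators via fresh surrogates defined by biconditional $\SVbox$-clauses (sound in all polarities because the surrogate is functionally pinned), express $\Lnext p$ through $\Rnext$ on a surrogate, and express $\SVbox p$ as a composition of $\Lbox$ and $\Rbox$ --- the paper rewrites $\SVbox\lambda$ as $\Rbox\Lbox\lambda$ and then unnests, while you use $\Lbox t$ with $t\leftrightarrow\Rbox p$, which is the same device. The one small wrinkle is that you put plain-variable conjuncts of $\varphi$ directly into $\Psi$, whereas the statement requires every variable of $\Psi$ to occur in $\Phi$ (this is needed for condition \iref{(B$_1$)} of Lemma~\ref{l:structure}, where $\Psi$ must be contained in a type for $\overline{\Phi}$); routing \emph{all} initial conjuncts, including bare variables, through the fresh guard --- as the paper does uniformly with a single fresh $p$ and clauses $\SVbox(\neg p\lor\lambda)$ --- removes this discrepancy.
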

\begin{proof}
First, we take a fresh variable $p$ and replace all the conjuncts of the form $\lambda$ and $\neg\lambda$ in $\varphi$ by $\SVbox(\neg  p \lor \lambda)$ and $\SVbox (\neg p\lor \neg\lambda)$, respectively; set $\Psi = p$. For an $\PTL_\alpha\Xallop$ or $\PTL_\alpha\Xbox$-formula, we replace the temporal literals $\SVbox\lambda$ with $\Rbox\Lbox\lambda$. Then, for each $\Lnext\lambda$, we take a fresh variable, denoted $\overline{\Lnext\lambda}$, replace each occurrence of $\Lnext\lambda$ with $\overline{\Lnext\lambda}$ and add the conjuncts $\SVbox(\Rnext\overline{\Lnext\lambda} \to \lambda)$ and $\SVbox(\lambda \to \Rnext\overline{\Lnext\lambda})$ to the resulting formula. In a similar manner, we use fresh propositional variables as abbreviations for nested temporal operators and obtain the required equisatisfiable formula. 
Clearly, this can be done in logarithmic space.\qed
\end{proof}

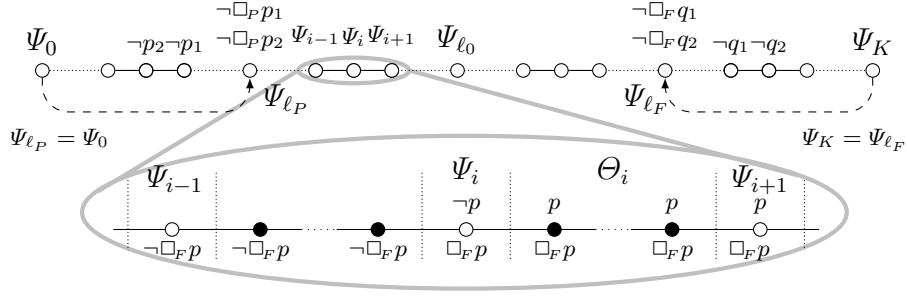
\begin{figure}[t!]%
\centering%
\begin{tikzpicture}[scale=0.92,>=latex,point/.style={circle,draw=black,minimum size=1.8mm,inner sep=0pt}]\footnotesize
\draw[gray!50, ultra thick] (1.5*3-0.8,0) to (-0.2+1,-1.65-0.1);
\draw[gray!50, ultra thick] (1.5*3+0.8,0) to (12.5*0.85-0.8+1.1,-1.42-0.1);
\draw[ultra thick,gray!50] (6*0.85+1,-2.05) ellipse (6.5*0.85 and 1.3*0.85);
\draw[ultra thick,gray!50] (1.5*3,0) ellipse (0.8 and 0.185);
\draw[densely dotted] (0,0) -- (8*1.5,0);
\node[point,fill=white,label=above:{\large $\Psi_0$}] (psi0) at (1.5*0,0) {};
\node[point,fill=white,label=below right:{\large $\Psi_{\ell_\PP}$},label=above:{\begin{tabular}{c}$\neg\Lbox p_1$\\$\neg\Lbox p_2$\end{tabular}}] (psi2) at (1.5*2,0) {};
\node[point,fill=white,label=above:{\large $\Psi_{\ell_0}$}] (psi4) at (1.5*4,0) {};
\node[point,fill=white,label=below left:{\large $\Psi_{\ell_\FF}\hspace*{-0.6em}$},label=above:{\begin{tabular}{c}$\neg\Rbox q_1$\\$\neg\Rbox q_2$\end{tabular}}] (psi6) at (1.5*6,0) {};
\node[point,fill=white,label=above:{\large $\Psi_K$}] (psi8) at (1.5*8,0) {};
\node at (1.5*8-0.25,-1) {$\Psi_K = \Psi_{\ell_\FF}$};
\node at (0.25,-1) {$\Psi_{\ell_\PP} = \Psi_0$};
\foreach \i in {1,3,5,7} {
	\node[point,fill=white]  (1) at (1.5*\i-0.55,0) {};
	\node[point,fill=white]  (2) at (1.5*\i,0) {};
	\node[point,fill=white]  (3) at (1.5*\i+0.55,0) {};
	\draw (1) -- (2);
	\draw (2) -- (3);
}
\begin{scope}[label distance=3pt]
	\node[point,fill=white,label=above:{$\Psi_{i-1}$}]  (1) at (1.5*3-0.55,0) {};
	\node[point,fill=white,label=above:{$\Psi_i$}]  (2) at (1.5*3,0) {};
	\node[point,fill=white,label=above:{$\Psi_{i+1}$}]  (3) at (1.5*3+0.55,0) {};
\end{scope}
\draw[->,dashed,rounded corners=10] (psi0) -- ++(0,-0.6) -| (psi2);  
\draw[->,dashed,rounded corners=10] (psi8) -- ++(0,-0.6) -| (psi6);  
\node[point, label=above:{$\neg p_1$}] at (1.5*1+0.55,0) {};
\node[point, label=above:{$\neg p_2$}] at (1.5*1,0) {};
\node[point, label=above:{$\neg q_1$}] at (1.5*7-0.55,0) {};
\node[point, label=above:{$\neg q_2$}] at (1.5*7,0) {};
\begin{scope}[xshift=6mm,yshift=-23mm,scale=0.85]
\begin{scope}
\clip (6.5,0.25) ellipse (6.5 and 1.3);
\draw (0.5,0) -- (3.7,0);
\draw[dotted] (3.7,0) -- (4.3,0);
\draw (4.3,0) -- (8.7,0);
\draw[dotted] (8.7,0) -- (9.3,0);
\draw (9.3,0) -- (12.5,0);
\node at (1.5,0) [point,fill=white,label=below:{$\neg\Rbox p$}] {};
\node at (3,0) [point,fill=black,label=below:{$\neg\Rbox p$}] {};
\node at (5,0) [point,fill=black,label=below:{$\neg\Rbox p$}] {};
\node at (6.5,0) [point,fill=white,label=below:{$\Rbox p$},label=above:{$\neg p$}] {};
\node at (8,0) [point,fill=black,label=below:{$\Rbox p$},label=above:{$p$}] {};
\node at (10,0) [point,fill=black,label=below:{$\Rbox p$},label=above:{$p$}] {};
\node at (11.5,0) [point,fill=white,label=below:{$\Rbox p$\hspace*{3mm}},label=above:{$p$}] {};
\draw[densely dotted] (0.75,-0.5) -- ++(0,1.5);
\draw[densely dotted] (2.25,-0.5) -- ++(0,1.5);
\draw[densely dotted] (5.75,-0.5) -- ++(0,1.5);
\draw[densely dotted] (7.25,-0.5) -- ++(0,1.5);
\draw[densely dotted] (10.75,-0.5) -- ++(0,1.5);
\draw[densely dotted] (12.25,-0.5) -- ++(0,1.5);
\end{scope}
\node at (1.55,0.85) {\large $\Psi_{i-1}$};
\node at (6.5,1) {\large $\Psi_i$};
\node at (9,1) {\large $\Theta_i$};
\node at (11.5,0.85) {\large $\Psi_{i+1}$};
\end{scope}
\end{tikzpicture}
\caption{The structure of a model in Lemma~\ref{l:structure}.}\label{fig:structure}
\end{figure}

We now characterise the structure of interpretations satisfying
formulas $\varphi^*$ of the form~\eqref{eq:kromltl:input} in a way
similar to other known descriptions of temporal models; see,
e.g.,~\cite{Gabbayetal94,GKWZ03}. This characterisation will be used in the upper bound proofs of Theorems~\ref{lem:bin-ltl:krom-diamond-next-np} and~\ref{thm:hornLTL}.
For each $\Rbox p$ in $\Phi$, we take a fresh propositional variable, $\overline{\Rbox p}$, and call it the \emph{surrogate} of $\Rbox p$; likewise, for each $\Lbox p$ in $\Phi$ we take its surrogate 
$\overline{\Lbox p}$. Let $\overline{\Phi}$ be the result of replacing all the $\Box$-literals in $\Phi$ with their surrogates. 
%
By a \emph{type} for $\overline{\Phi}$ we mean any set of literals
that contains either $p$ or $\neg p$ (but not both), for each variable $p$ in
$\overline{\Phi}$ (including the surrogates). 

The proof of the following lemma is standard; cf.~\cite{Gabbayetal94,GKWZ03}. The reader may find
useful Fig.~\ref{fig:structure} illustrating the conditions of the
lemma.

\begin{lemma}[structure of models]\label{l:structure}
Let $\varphi$ be an $\boolLTL\Xallop$-formula of the form~\eqref{eq:kromltl:input} and $K = |\varphi| + 4$. Then $\varphi$ is satisfiable iff there exist integers $m_0 < m_1 < \dots < m_K$
and types $\Psi_0, \Psi_1, \dots, \Psi_K$ for $\overline{\Phi}$ such that\textup{:}
\begin{description}\itemsep=4pt
\item[\textup{(B$_0$)}] $m_{i+1} - m_i < 2^{|\overline{\Phi}|}$, for $0 \leq i < K$\textup{;}
%
\item[\textup{(B$_1$)}] there exists $\ell_0$, $0 < \ell_0 < K$,  
such that $\Psi \subseteq \Psi_{\ell_0}$\textup{;}
\item[\textup{(B$_2$)}] 
$\overline{\Rbox p}\in \Psi_i 
\Rightarrow p,\overline{\Rbox p}\in \Psi_{i+1}$ \ \ and \ \ $\overline{\Rbox p} \in \Psi_{i+1} \setminus \Psi_i \Rightarrow p\notin \Psi_{i+1}$ $(0 \leq i < K)$,\\
$\overline{\Lbox p}\in \Psi_i
\Rightarrow p,\overline{\Lbox p}\in \Psi_{i-1}$ \ \ and \ \  
$\overline{\Lbox p} \in \Psi_{i-1} \setminus
  \Psi_i  
\Rightarrow p\notin \Psi_{i-1}$ $(0 < i \leq K)$\textup{;} 
\item[\textup{(B$_3$)}]  there exist $\ell_\FF < K$ and $\ell_\PP > 0$ such that  
\begin{itemize}\itemsep=2pt
\item $\Psi_{\ell_\FF}=\Psi_K$ and, for each $\neg\overline{\Rbox p} \in \Psi_{\ell_\FF}$, there is $j \geq \ell_\FF$ with $\neg p \in \Psi_j$,
\item $\Psi_{\ell_\PP}=\Psi_0$ and, for each $\neg\overline{\Lbox p} \in \Psi_{\ell_\PP}$, there is $j \leq \ell_\PP$ with $\neg p \in \Psi_j$\textup{;}
\end{itemize}
\item[\textup{(B$_4$)}]
the following formulas are consistent, for $0 \leq i < K$\textup{:} 
\begin{equation*}
\psi_i ~=~  \Psi_i \ \ \ \land \bigwedge_{k = 1}^{m_{i+1} - m_i - 1} \hspace*{-1.5em}\Rnext^k 
  \Theta_i \ \ \ \ \land \ \ \ \Rnext^{m_{i+1} - m_i} \Psi_{i+1} \ \ \ \ \land \ \ \ \ \SVbox \overline{\Phi},
\end{equation*}
where $\Rnext^k \Psi$ is the result of attaching $k$ operators $\Rnext$ to each literal in $\Psi$ and
\begin{multline*}
\Theta_i \ \ = \ \
 \bigl\{ p, \ \overline{\Rbox p} \mid \overline{\Rbox p}\in \Psi_i \bigr\} \cup
\bigl\{ \neg\overline{\Rbox p} \mid \neg \overline{\Rbox p}\in \Psi_i \bigr\} \cup {} \\
 \bigl\{ p, \ \overline{\Lbox p} \mid \overline{\Lbox p}\in\Psi_{i+1} \bigr\} \cup
\bigl\{ \neg\overline{\Lbox p} \mid \neg\overline{\Lbox p}\in \Psi_{i+1} \bigr\}.
\end{multline*}
\end{description}
\end{lemma}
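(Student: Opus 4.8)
The plan is to prove both directions of the equivalence, with the bulk of the work in the ``only if'' direction where we extract a finite description from an arbitrary model.

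For the ``if'' direction, we assume the integers $m_0 < \dots < m_K$ and types $\Psi_0,\dots,\Psi_K$ satisfying (B$_0$)--(B$_4$) are given, and we build a model $\M$ of $\varphi$. The idea is to stitch together witnesses for the consistent formulas $\psi_i$ from (B$_4$): each $\psi_i$ has a propositional model on the integer segment $[m_i, m_{i+1}]$ (treating $\Rnext^k$-prefixed literals as constraints on successive points and $\SVbox\overline{\Phi}$ as a constraint forcing $\overline{\Phi}$ to hold at every point of the segment), and the endpoint conditions $\Psi_i$ at $m_i$ and $\Rnext^{m_{i+1}-m_i}\Psi_{i+1}$ at $m_{i+1}$ make neighbouring segments agree, so the pieces glue into a single assignment on $[m_0, m_K]$. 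Outside this interval we extend periodically: to the right of $m_{\ell_\FF}$ we repeat the block between $m_{\ell_\FF}$ and $m_K$ (legitimate because $\Psi_{\ell_\FF}=\Psi_K$), and symmetrically to the left of $m_{\ell_\PP}$ using $\Psi_{\ell_\PP}=\Psi_0$. Then I would verify that the surrogate variables genuinely behave like the $\Box$-literals they name: (B$_2$) guarantees the step behaviour of $\overline{\Rbox p}$ and $\overline{\Lbox p}$ within the finite part and the periodic extension preserves it, while the eventuality conditions in (B$_3$) guarantee that each false $\overline{\Rbox p}$ (resp.\ $\overline{\Lbox p}$) in the repeated block is witnessed by some point with $\neg p$, so in the infinite model $\overline{\Rbox p}$ really is equivalent to $\Rbox p$. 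Hence replacing surrogates by their originals turns a model of $\SVbox\overline{\Phi}$ into a model of $\SVbox\Phi$, and (B$_1$) places $\Psi$ at $m_{\ell_0}$, so $\M$ satisfies $\varphi$ at that point; shifting time makes it satisfy $\varphi$ at $0$.

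For the ``only if'' direction, suppose $\M \models \varphi$. For every point $n$ let $t(n)$ be the type for $\overline{\Phi}$ induced by $\M$ at $n$ (reading $\overline{\Rbox p}$ as $\M,n\models\Rbox p$, etc.). There are at most $2^{|\overline{\Phi}|}$ distinct types, and the sequence $t(n)$ has a standard eventually-periodic shape: there are thresholds beyond which, to the right, only types recurring infinitely often appear, and likewise to the left. Using this I would select the $2K{+}1$ landmark points $m_0 < \dots < m_K$ together with indices $\ell_\PP$, $\ell_0$, $\ell_\FF$: pick $m_{\ell_0}=0$ (the point where $\varphi$ holds), choose $m_{\ell_\FF}$ far enough to the right that its type recurs at some later point already chosen as $m_K$ (with all $\neg\overline{\Rbox p}$ in it witnessed between $\ell_\FF$ and $K$), symmetrically $m_{\ell_\PP}$ and $m_0$ on the left, and fill in intermediate landmarks so that consecutive gaps are controlled. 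The bound $K=|\varphi|+4$ is what makes room for: one point for the left period start, one for $0$, one for the right period start, one spare, plus the $\SVbox$-eventualities that need witnessing ($\SVdiamond$-type obligations inside $\Phi$ contribute one landmark each). The distance bound (B$_0$), $m_{i+1}-m_i < 2^{|\overline{\Phi}|}$, is obtained by the usual contraction argument: if two consecutive landmarks are further apart, some type repeats strictly between them and the intervening segment can be excised without affecting satisfaction of $\SVbox\overline{\Phi}$ or of any already-placed witness, so we may assume the shortest such model. Conditions (B$_2$)--(B$_4$) then hold by construction: (B$_2$) is just the semantics of $\Box$ read off neighbouring points, (B$_3$) records the chosen periodicity together with the witnesses for negated box-literals, and (B$_4$) holds because $\psi_i$ is literally satisfied on the segment $[m_i,m_{i+1}]$ of $\M$ (with $\Theta_i$ capturing exactly the box-literal information that must persist through the interior of the segment).

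The main obstacle I anticipate is the bookkeeping in the ``only if'' direction: getting a \emph{single} choice of landmarks that simultaneously realises the periodicity on both sides, pins $0$ at $\ell_0$, witnesses every $\SVbox$-eventuality arising from diamond-like literals in $\Phi$, \emph{and} still keeps the total count at $K=|\varphi|+4$ while leaving every gap shorter than $2^{|\overline{\Phi}|}$. One has to be careful that excising a long gap to enforce (B$_0$) does not destroy a witness that was placed in that gap or merge two distinct landmarks; the fix is to do the contraction first, only between landmarks and only on stretches containing no landmark, and to note that a recurring interior type lets us cut while keeping the endpoints and all marked witnesses intact. The rest is routine induction on formula structure for the semantic verifications, which I would not spell out in detail.
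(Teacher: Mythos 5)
Your proposal is correct and follows essentially the same route as the paper: extracting types, fixing recurrence thresholds to the left and right, taking one landmark per $\Box$-literal to witness the eventualities $\neg\Rbox L\to\Rdiamond\neg L$ (and symmetrically), contracting repeated types between landmarks to get \iref{(B$_0$)}, and, conversely, gluing finite cuts of models of the $\psi_i$ and repeating the outer blocks periodically. The points you flag as delicate (doing the contraction only strictly between landmarks, and using \iref{(B$_2$)}, \iref{(B$_3$)} plus the $\Theta_i$ to recover the genuine $\Box$-semantics from the surrogates) are exactly the ones the paper also treats, at a comparable level of detail.
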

\begin{proof}
  ($\Rightarrow$) Let $\mathfrak{M},0\models\varphi$. Denote by
  $\Psi(m)$ the type for $\overline\varphi$ containing all literals that
  hold at the moment $m$ in $\mathfrak{M}$. As the number of types is
  finite, there is $m_{\FF} > 0$ such that each type in the sequence
  $\Psi(m_{\FF}),\Psi(m_{\FF}+1),\dots$ appears infinitely
  often; similarly, there is $m_{\PP} < 0$ such that each type in the
  sequence $\Psi(m_{\PP}),\Psi(m_{\PP}-1),\dots$ appears
  infinitely often.  Then, for each literal $\Rbox L$ of $\Phi$,
  we have one of three options: (1) $L$ is always true in $\mathfrak{M}$, in which case we set
  $m_{\Box_F L} = 0$; (2)  
  there is $m_{\Box_F L}$ such that $\mathfrak{M},m_{\Box_F
    L}\models \neg L \land \Rbox L$, in which case $m_{\PP} < m_{\Box_F L}
  < m_{\FF}$; or (3) $\Rbox L$ is always false in $\mathfrak{M}$, in which
  case $L$ is false infinitely often after the moment $m_{\FF}$, and so
  there is $m_{\Box_F L}\geq m_{\FF}$ such that $\mathfrak{M},m_{\Box_F
    L}\models \neg L$. Symmetrically, for each literal $\Lbox L$ of
  $\Phi$, we have one of three options: (1) $L$ is always true in $\mathfrak{M}$, in which case we set
  $m_{\Box_P L} = 0$; (2) there is an $m_{\Box_P L}$ such that $m_{\PP} <
  m_{\Box_P L} < m_{\FF}$ and $\mathfrak{M},m_{\Box_P L}\models \neg L
  \land \Lbox L$; or (3) $\Lbox L$ is always false in $\mathfrak{M}$, in
  which case there is $m_{\Box_P L} \leq m_{\PP}$ such that
  $\mathfrak{M},m_{\Box_P L}\models \neg L$. Let
  $m_1<m_2<\dots<m_{K-1}$ be an enumeration of the set (padded if necessary)
\begin{equation*}
M = \{0,m_{\PP},m_{\FF}\}\cup\{ m_{\Box_F L} \mid \Rbox L \text{ occurs in } \Phi \}\cup\{ m_{\Box_P L} \mid \Lbox L \text{ occurs in } \Phi\}.
\end{equation*}
Let $m_K > m_{K-1}$ be such that $\Psi(m_K) =
\Psi(m_{\FF})$ and let $m_0 < m_1$ be such that $\Psi(m_0) =
\Psi(m_{\PP})$. We then set $\Psi_i = \Psi(m_i)$, for $0
\leq i \leq K$.  Let $\ell_0$, $\ell_P$ and $\ell_F$ be such that
$m_{\ell_0} = 0$, $m_{\ell_P} = m_{\PP}$ and $m_{\ell_F} = m_{\FF}$. It should
be clear that~\textbf{(B$_1$)}--\textbf{(B$_4$)} hold. Finally, given
a model of $\overline\varphi$ with two moments $m$ and $n$ such that the types
at $m$ and $n$ coincide, we can construct a new model for $\varphi$ by
`removing' the states $i$ with $m \leq i <
n$. Since the number of distinct types is bounded by $2^{|\overline{\Phi}|}$,
by repeated applications of this construction we can further
ensure~\textbf{(B$_0$)}.

\smallskip

($\Leftarrow$) We construct a model $\mathfrak{M}$ of $\varphi$
by taking finite cuts of the models $\mathfrak{M}_i$ of the formulas
in~\textbf{(B$_4$)}: between the moments $m_0$ and $m_K$, the
model $\mathfrak{M}$ coincides with the models
$\mathfrak{M}_0,\dots,\mathfrak{M}_{K-1}$ so that at the moment $m_i$
in $\mathfrak{M}$ we align the moment 0 of $\mathfrak{M}_i$, and at the
moment $m_{i+1}$ we align the moment $m_{i+1}-m_i$ of
$\mathfrak{M}_i$, which coincides with the moment 0 of
$\mathfrak{M}_{i+1}$ because both are defined by $\Psi_{i+1}$; before
the moment $m_0$, the model $\mathfrak{M}$ repeats infinitely often  its own
fragment between $m_0$ and $m_{\ell_P}$, and after $m_K$ it
repeats infinitely often its fragment between $m_{\ell_F}$ and $m_K$
(both fragments contain more than one
state). It is readily seen that
$\mathfrak{M},m_{\ell_0}\models\varphi$.\qed
\end{proof}

The intuition
behind this lemma is as follows (see Fig.~\ref{fig:structure}). If
$\varphi$ is satisfiable, then it has a model $\M$ that consists of the 
initial fragments of models $\M_i$ of the formulas $\psi_i$: namely, the types of the moments $m_i,\dots,m_{i+1}$ in $\M$ coincide with the types of the moments $0,\dots, (m_{i+1}-m_i)$ in $\M_i$. By~\iref{(B$_4$)}, we  have $\M,0 \models \SVbox \overline{\Phi}$. Then \iref{(B$_1$)} makes sure  that $\M,0 \models\Psi$. Conditions \iref{(B$_2$)} and \iref{(B$_3$)} guarantee that if $\overline{\Rbox p} \in \Psi_i$ then $p \in \Psi_j$ for all types $\Psi_j$ located to the right of $\Psi_i$ in Fig.~\ref{fig:structure} and, conversely, if $\overline{\Rbox p} \notin \Psi_i$ then $\neg p \in \Psi_j$, for some $\Psi_j$ to the right of $\Psi_i$; and symmetrically for the $\Lbox$-literals. It follows  that $\M,0\models \SVbox \Phi$.

\begin{theorem}\label{thm:kromLTL}
The satisfiability problem for $\kromLTL\Xu$-formulas is in \NLogSpace.
\end{theorem}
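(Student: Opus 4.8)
The plan is to reduce satisfiability of a $\kromLTL\Xu$-formula to \textsc{2SAT}, which is in \NLogSpace. By Lemma~\ref{restricted}, it suffices to consider an input of the form $\Psi \land \SVbox\Phi$, where $\Psi$ is a conjunction of propositional variables and $\Phi$ is a conjunction of \emph{non-nested} Krom clauses whose only temporal operator is $\SVbox$. Thus every clause of $\Phi$ is, up to the placement of negations, of one of the binary shapes $L_1 \lor L_2$, where each $L_j$ is of the form $(\neg)p$ or $(\neg)\SVbox p$ for a propositional variable $p$.

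The key observation is that $\SVbox p$ is a \emph{global} statement: in any model, either $p$ holds at every moment of time (in which case $\SVbox p$ holds everywhere), or $p$ fails somewhere (in which case $\SVbox p$ fails everywhere). So a model is essentially determined, as far as $\Phi$ is concerned, by (a) the Boolean value, for each variable $p$ occurring under $\SVbox$, of "is $p$ globally true?", call this a fresh propositional variable $g_p$, and (b) for each moment of time the set of ordinary literals holding there. The first step is therefore to introduce, for every $p$ occurring in a literal $\SVbox p$ of $\Phi$, a fresh variable $g_p$, and to replace every occurrence of $\SVbox p$ in $\Phi$ by $g_p$; this yields a purely propositional set of Krom clauses $\widehat\Phi$ over the variables of $\Phi$ together with the $g_p$'s. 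Next I would add the linking Krom clauses $\neg g_p \lor p$ (meaning $g_p \to p$, which must hold at every point) for each such $p$, and finally I would add, for the initial moment, the unit clauses of $\Psi$. Call the resulting set of propositional Krom clauses $C$.

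It remains to argue correctness: $\Psi \land \SVbox\Phi$ is satisfiable iff $C$ is satisfiable as a propositional \textsc{2SAT} instance. For the forward direction, from a model $\M$ with $\M,0\models\Psi\land\SVbox\Phi$ take the truth assignment that sets $g_p$ true iff $\M,n\models p$ for all $n\in\Z$, and sets each ordinary variable $q$ to its value at moment $0$; one checks that all clauses of $C$ are satisfied (the clauses of $\widehat\Phi$ because $\SVbox\Phi$ holds at $0$ and $\SVbox p$ is equivalent to $g_p$ under this reading, the linking clauses because $g_p\to p$ at every point, and the $\Psi$-clauses trivially). For the converse, given a satisfying assignment $v$ of $C$, build $\M$ by letting, at \emph{every} moment $n$, a variable $q$ be true iff $g_q$ is true (when $q$ occurs under $\SVbox$) and otherwise iff $v(q)=1$ — more carefully, define the truth value of $q$ at moment $0$ to be $v(q)$, and at every moment $n\neq 0$ to be $v(g_q)$ if $g_q$ is defined and $v(q)$ otherwise, then observe that $\SVbox q$ holds in $\M$ iff $v(g_q)=1$ (using the linking clause $g_q\to q$ to handle moment $0$). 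Since every clause of $\Phi$ is satisfied at moment $0$ by $v$ on $\widehat\Phi$, and its literals have the same truth value at every moment under this reading, $\M,0\models\SVbox\Phi$, and $\M,0\models\Psi$. The reduction is clearly computable in logarithmic space, and since \textsc{2SAT} $\in$ \NLogSpace{} and \NLogSpace{} is closed under log-space reductions, the theorem follows.

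The main subtlety — not really an obstacle, but the point that needs care — is the case analysis ensuring that $\SVbox p$ really does behave like a single Boolean value $g_p$ that is consistent across all time points, in particular that moment $0$ does not get special treatment that would break the equivalence $\SVbox q \equiv g_q$; this is exactly why the linking clause $g_q \to q$ (forced to hold at $0$ via the $\Psi$-style encoding, or simply noted to hold at every point) is needed, and why we take $v(q)$ rather than $v(g_q)$ at moment $0$. One should also double-check that nesting of $\SVbox$ has indeed been removed by Lemma~\ref{restricted}, so that no literal of the form $\SVbox\SVbox p$ survives; since that lemma guarantees non-nested operators, the grammar for $\PTL_\alpha\Xu$ collapses to literals $p$ and $\SVbox p$ only, and the reduction above applies verbatim.
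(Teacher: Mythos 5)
Your reduction is unsound: the backward direction fails. The clauses of $\Phi$ must hold at \emph{every} moment of time, but your propositional instance $C$ checks them only once, under the single assignment $v$. In the model you build from $v$, a variable $q$ occurring under $\SVbox$ takes the value $v(q)$ at moment $0$ but $v(g_q)$ at every other moment, and the linking clause $g_q\to q$ allows these to differ ($v(q)=1$, $v(g_q)=0$); so your claim that the literals of a clause of $\Phi$ ``have the same truth value at every moment under this reading'' is false, and a clause verified at moment $0$ can fail at moment $1$. Concretely, take $\varphi=\SVbox p\land\SVbox(\neg\SVbox p)$, which is unsatisfiable (the first conjunct forces $p$, hence $\SVbox p$, at every moment). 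Your $C$ consists essentially of the unit clauses $p$ and $\neg g_p$ together with $\neg g_p\lor p$, and is satisfied by $v(p)=1$, $v(g_p)=0$. Nor can the problem be repaired by forcing $g_q\leftrightarrow q$ (i.e., reading off a constant model): the formula $\SVbox(p\lor q)\land\SVbox(\neg\SVbox p)\land\SVbox(\neg\SVbox q)$ is satisfiable, but only in models whose assignment varies over time, and would become unsatisfiable under that collapse.

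What is missing is a mechanism that provides, for each $p_i$ with $g_{p_i}$ false, a \emph{witness moment} at which $p_i$ is false, while still enforcing all clauses of $\Phi$ at that moment. The paper's proof does exactly this: it takes $N+1$ propositional copies $p_i^m$ of each variable (one moment $m$ per variable of $\varphi$), replicates every clause of $\Phi$ at every moment $m$, adds $\overline{\SVbox p_i}\to p_i^m$ for all $m$, and adds $\neg\overline{\SVbox p_i}\to\neg p_i^i$, so that moment $i$ serves as the designated witness for $\neg\SVbox p_i$. This is still a log-space reduction to \textsc{2SAT}, but it cannot be collapsed to the single copy you propose.
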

\begin{proof}
The proof is by reduction to 2SAT. Let $\varphi = \Psi\land\SVbox \Phi$ be of the form~\eqref{eq:kromltl:input} and let $p_1, \dots, p_N$ be the variables of $\varphi$. For each variable $p_i$ of $\varphi$, we take $N+1$ variables $p_i^m$, for $0 \leq m \leq N$. We take a special fresh variable $\overline{\SVbox p_i}$  for each $\SVbox p_i$ and extend the $\cdot^m$ notation by taking $(\SVbox p_i)^m = \overline{\SVbox p}$.
Then the literals of the form $p_i$ and $\neg p_i$ in $\Psi$ give rise to clauses $p^0_i$  and $\neg p^0_i$, respectively, each $\SVbox (\lambda_1\lor \lambda_2)$ in $\Phi$ gives rise to $N + 1$ clauses  $\lambda_1^m\lor \lambda_2^m$, for $0 \leq m \leq N$, and  similarly for other forms of clauses in $\Phi$. Finally, we add the clauses $\overline{\SVbox p_i} \to p_i^m$, for all $0 \leq m \leq N$, and $\neg \overline{\SVbox p_i} \to \neg p^i_i$, which express the semantics of $\SVbox p_i$ (without loss of generality we may assume that $p_i$ is false at moment $i$ in case $\SVbox p_i$ is false). Clearly, $\varphi$ is satisfiable iff the above set of binary clauses is satisfiable.\qed
\end{proof}


\section{Binary-Clause LTL and Arithmetic Progressions}\label{binary}

In this section, we prove \NP-completeness of the satisfiability
problem for $\kromLTL\Xallop$ and $\coreLTL\Xallop$. The key
ingredient of the proof of the upper bound is an encoding of condition
\iref{(B$_4$)} for \emph{binary clauses} by means of arithmetic
progressions (via unary automata). The proof of the lower bound is by
reduction of the problem whether a given set of
arithmetic progressions covers all the natural numbers.

Let $\varphi$ be an $\kromLTL\Xallop$-formula of the
form~\eqref{eq:kromltl:input}.  By Lemma~\ref{l:structure}, to check
satisfiability of $\varphi$  
it suffices to guess $K + 1$ types for $\overline{\Phi}$ and $K$ natural
numbers $n_i = m_{i+1} - m_i$, for $0 \leq i < K$, whose binary
representation, by~\iref{(B$_0$)}, is polynomial in
$|\overline{\Phi}|$. Evidently, \iref{(B$_1$)}--\iref{(B$_3$)} can be
checked in polynomial time. Our aim now is to show that~\iref{(B$_4$)}
can also be verified in polynomial time, which will give a nondeterministic polynomial-time algorithm for checking satisfiability of $\kromLTL\Xallop$-formulas. 
\begin{theorem}\label{lem:bin-ltl:krom-diamond-next-np}
The satisfiability problem for $\kromLTL\Xallop$-formulas is in \NP.
\end{theorem}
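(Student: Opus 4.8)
The plan is to follow the strategy already sketched in the paragraph before the theorem: by Lemma~\ref{l:structure}, satisfiability of an $\kromLTL\Xallop$-formula $\varphi=\Psi\land\SVbox\overline\Phi$ of the form~\eqref{eq:kromltl:input} is equivalent to the existence of $K+1$ types $\Psi_0,\dots,\Psi_K$ for $\overline\Phi$ (where $K=|\varphi|+4$) together with gaps $n_i=m_{i+1}-m_i$ with $0<n_i<2^{|\overline\Phi|}$ satisfying \iref{(B$_0$)}--\iref{(B$_4$)}. The guess is polynomial in $|\varphi|$: each type has at most $|\overline\Phi|$ literals, there are $K+1$ of them, and each $n_i$ is written in binary using at most $|\overline\Phi|$ bits. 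Conditions \iref{(B$_0$)}--\iref{(B$_3$)} only talk about the guessed types and the binary values $n_i$, so they are checkable in deterministic polynomial time. Hence everything reduces to showing that, for each $i$, the consistency of the formula $\psi_i$ in \iref{(B$_4$)} can be checked in time polynomial in $|\overline\Phi|+\log n_i$.

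So the heart of the argument is a single-segment problem: given a type $\Psi_i$ at position $0$, a type $\Psi_{i+1}$ at position $n_i$, the "interior" type $\Theta_i$ required at every position $1,\dots,n_i-1$, and the set of binary clauses $\SVbox\overline\Phi$ (with no nested temporal operators, only $\Rnext$ after the reduction of Lemma~\ref{restricted}), decide whether there is a propositional assignment to positions $0,\dots,n_i$ consistent with all of this. The point is that $n_i$ may be exponential, so we cannot write the segment out explicitly; we must exploit the Krom (binary) restriction. I would model the legal one-step transitions as a relation $R$ on types: a pair $(\sigma,\tau)$ of types is in $R$ iff every clause of $\overline\Phi$ that mentions only current- and next-moment literals is satisfied. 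Because each clause is binary and $\Rnext$ is the only operator, each clause spans at most two consecutive moments and is equivalent to a 2CNF constraint relating the truth values at moment $k$ and $k+1$; the interior constraint is that the assignment at every interior moment extends $\Theta_i$. The existence of a path of length exactly $n_i$ from a $\Psi_i$-assignment through $(n_i-1)$ interior $\Theta_i$-assignments to a $\Psi_{i+1}$-assignment is then a reachability question in a graph whose vertices are the finitely many relevant assignments (there are at most $2^{|\overline\Phi|}$ of them, so we cannot enumerate them either) — but here is where unary automata and arithmetic progressions enter: treat the transition relation restricted to interior assignments as a nondeterministic unary automaton, ask for which lengths $n$ there is a run, and use the fact that the set of such lengths is eventually periodic with period and threshold bounded by (a polynomial in) the number of states, i.e.\ by $2^{|\overline\Phi|}$.

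The technical tool I would invoke is the classical analysis of unary NFAs (Chrobak's theorem / the structure of the Parikh image of a unary language): the set $\{n : n\_i\text{-step run exists}\}$ is a finite union of arithmetic progressions whose offsets and periods are at most exponential in the number of states, hence have polynomial-size binary representations. Concretely, I would (i) fix the first and last states to come from $\Psi_i$ and $\Psi_{i+1}$ and the interior to respect $\Theta_i$, obtaining a unary automaton $\mathcal A_i$ of size $\le 2^{|\overline\Phi|}$ whose accepted lengths are exactly the feasible $n_i$; (ii) observe that $\mathcal A_i$ accepts $n_i$ iff $n_i$ lies in one of polynomially many arithmetic progressions $a+b\mathbb N$ with $a,b\le 2^{\mathrm{poly}(|\overline\Phi|)}$; and (iii) note that membership $n_i\in a+b\mathbb N$, with $n_i,a,b$ in binary, is just a modular-arithmetic check doable in polynomial time. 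The subtlety is that we cannot build $\mathcal A_i$ explicitly; instead I would show that the relevant period/offset data can be extracted symbolically. This is exactly where the binary-clause restriction is essential: a 2CNF transition relation on an exponential state space nonetheless has a succinct description (it is the product of the per-variable implications), so the eventual period of its iterates is governed by a polynomial-size object — e.g.\ one can compute, via fast matrix/relation exponentiation by repeated squaring (at most $|\overline\Phi|+\log n_i$ squarings, each on an object of size $\mathrm{poly}(|\overline\Phi|)$), whether the required transition of length exactly $n_i$ exists. In fact the cleanest route is to bypass automata altogether: reduce one-step consistency to implication edges between signed literals, so that $\psi_i$ is consistent iff a certain 2CNF instance whose "long range" constraints are generated by iterating an implication digraph $n_i$ times is satisfiable, and solve this by computing the $n_i$-th power of the implication relation by repeated squaring in $O(\log n_i)$ relation multiplications.

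The main obstacle, and the step that deserves the most care, is precisely this: turning "exponentially many moments" into "polynomially many operations". Naively checking \iref{(B$_4$)} requires reasoning about $n_i$ consecutive moments, and $n_i$ is only polynomial in binary. I expect to spend the real effort showing that, under the Krom restriction, the $n_i$-fold iteration of the one-step constraint has a representation of size polynomial in $|\overline\Phi|$ — equivalently, that the set of admissible segment lengths is a union of polynomially many, polynomially-representable arithmetic progressions — and that this representation can be produced in polynomial time (by repeated squaring / unary-automaton structure theory) and then checked against the guessed binary value $n_i$. Once this single-segment polynomial-time test is in place, combining it with the polynomial guess of types and gaps from Lemma~\ref{l:structure} and the easy checks of \iref{(B$_0$)}--\iref{(B$_3$)} yields the claimed $\NP$ algorithm, completing the proof.
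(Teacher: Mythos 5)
Your outer strategy is exactly the paper's: guess the $K+1$ types and the binary gaps $n_i$ via Lemma~\ref{l:structure}, check \iref{(B$_0$)}--\iref{(B$_3$)} directly, and reduce everything to a polynomial-time test for the consistency of each $\psi_i$ in \iref{(B$_4$)}. Your \emph{final} route for that test---work with an implication digraph on signed literals rather than on assignments, and iterate it---is also essentially the paper's: there the unary automata $\mathfrak{A}_{L,L'}$ have the \emph{literals} as states, and the only real difference is that you propose repeated squaring of the one-step relation where the paper converts the automata to Chrobak normal form and solves membership in arithmetic progressions. (Your first route, the transition relation on full types, is a dead end you correctly abandon: that automaton has $2^{|\overline\Phi|}$ states, so Chrobak normal form yields exponentially many progressions and there is no polynomial-time way to extract them.)

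However, the proposal stops short of the two places where the actual mathematical content lies. First, you assert that $\psi_i$ is consistent iff a certain 2CNF instance is satisfiable---which is a tautology, since $\psi_i$ \emph{is} a (doubly exponentially large) 2CNF over time-stamped variables---but the theorem needs the nontrivial equivalence with finitely many checkable conditions: in the paper, $\psi_i$ is consistent iff the forward/backward literal closures $F^k_\Phi(\Psi)$, $P^{n-k}_\Phi(\Psi')$ satisfy \iref{(L$_1$)} and \iref{(L$_2$)}, and the ``if'' direction requires an explicit model construction showing the forced literals never clash with each other or with $\Theta_i$. Second, iterating the one-step relation only computes all $n$-step consequences if $\Phi$ is first \emph{saturated} (the paper's $\Phi^*$ with property \iref{(S$_2$)}, whose completeness rests on the completeness of temporal resolution): without saturation, consequences obtained by mixing forward and backward $\nxt$-steps, or by propositional chaining within a single moment, are not captured by powers of $\{(L_1,L_2) \mid L_1 \to \nxt L_2 \in \Phi\}$, so your squared relation would undercount $F^n_\Phi(\Psi)$. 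A smaller but real omission: the interior condition \iref{(L$_2$)} quantifies over \emph{all} $0<k<n$, and plain repeated squaring only delivers the relation at the single exponent $n$; you would additionally need the ``reachable in between $1$ and $m$ steps'' relation (computable by a doubling recurrence, or, as in the paper, read off from the arithmetic progressions). With these three pieces supplied, your squaring-based variant would go through and would arguably handle \iref{(L$_1$)} more directly than the Chrobak route; without them the argument is a plan rather than a proof.
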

\begin{proof}
  In view of
  Lemma~\ref{restricted}, we write $\nxt$ in place of $\Rnext$. We denote propositional literals ($p$ or $\neg p$) by $L$ and
  temporal literals ($p$, $\neg p$, $\nxt p$ or $\neg\nxt p$) by $D$.
We assume that $\nxt\neg p$ is the same as $\neg \nxt p$. 
We
  use $\psi_1 \models \psi_2$ as a shorthand for `$\mathfrak M,0
  \models \psi_2$ whenever $\mathfrak M,0 \models \psi_1$, for any
  interpretation $\mathfrak M$.'
Thus, the problem is as follows: given a set $\Phi$ of binary clauses of the form $D_1 \lor D_2$, types $\Psi$ and $\Psi'$ for $\Phi$, a set $\Theta$ 
of propositional literals and a number $n > 0$ (in binary), decide whether 
\begin{equation}\label{eq:reachability}
\Psi  \ \ \land \ \ \bigwedge\nolimits_{k = 1}^{n-1} \nxt^k \Theta \ \ \land \ \ \nxt^n \Psi' \ \ \land \ \  \SVbox\Phi
\end{equation}
has a satisfying interpretation. For $0 \leq k \leq n$, we set: 
\begin{align*}
F^k_{\Phi}(\Psi) & = \bigl\{ L' \mid L\land \SVbox \Phi \models \nxt^k L', \text{ for } L\in\Psi \bigr\},\\
P^k_{\Phi}(\Psi') & = \bigl\{ L \mid \nxt^k L'\land \SVbox \Phi \models L, \text{ for } L'\in\Psi' \bigr\}.
\end{align*}

\begin{lemma}
Formula~\eqref{eq:reachability} is satisfiable iff the following conditions hold\textup{:}
\begin{description}
\item[\textup{(L$_1$)}] 
$F^0_{\Phi}(\Psi)\subseteq\Psi$, $F^n_{\Phi}(\Psi)\subseteq\Psi'$ and $P^0_{\Phi}(\Psi')\subseteq\Psi'$, $P^n_{\Phi}(\Psi')\subseteq\Psi$\textup{;}
\item[\textup{(L$_2$)}] 
 $\neg L\notin F^k_{\Phi}(\Psi)$ and $\neg L\notin P^{n-k}_{\Phi}(\Psi')$,  for all $L\in\Theta$ and $0 < k < n$.
\end{description}
\end{lemma}
\begin{proof}
Clearly, if~\eqref{eq:reachability} is satisfiable
then the above conditions hold.  For the converse direction,  observe that if $L'\in F^k_\Phi(\Psi)$ then, since $\Phi$ is a set of binary clauses, there
  is a sequence of $\nxt$-prefixed literals $\nxt^{k_0} L_0 \leadsto \nxt^{k_1} L_1 \leadsto \dots \leadsto
  \nxt^{k_m} L_m$ such that $k_0 = 0$, $L_0\in\Psi$, $k_m = k$, $L_m =
  L'$, each $k_i$ is between $0$ and $n$ and the $\leadsto$ relation is defined by taking $\nxt^{k_i} L_i \leadsto \nxt^{k_{i+1}} L_{i+1}$ just in one of the three cases: $k_{i+1} = k_i$ and $L_i \to L_{i+1}\in\Phi$ or $k_{i+1} = k_i + 1$ and $L_i \to \nxt L_{i+1}\in\Phi$ or $k_{i+1} = k_i - 1$ and $\nxt L_i \to L_{i+1}\in\Phi$ (we assume that, for example, $\neg q \to \neg p\in\Phi$ whenever $\Phi$ contains $p \to q$).  
  So, suppose conditions~\textbf{(L$_1$)}--\textbf{(L$_2$)} hold. We construct an
  interpretation
  satisfying~\eqref{eq:reachability}. By~\textbf{(L$_1$)}, both
  $\Psi\land\SVbox\Phi$ and $\nxt^n \Psi'\land\SVbox\Phi$ are
  consistent. So, let $\mathfrak{M}_\Psi$ and $\mathfrak{M}_{\Psi'}$ be such that
  $\mathfrak{M}_\Psi,0\models\Psi\land\SVbox \Psi$ and $\mathfrak{M}_\Psi,n\models\Psi'\land\SVbox \Psi$, respectively.
  Let $\mathfrak{M}$ be an interpretation that coincides with $\mathfrak{M}_\Psi$ for all
  moments $k \leq 0$ and with $\mathfrak{M}_{\Psi'}$ for all $k \geq n$;
  for the remaining $k$, $0 < k < n$, it is defined as follows. First,
  for each $p\in \Theta$ , we make $p$ true at $k$ and, for each $\neg
  p\in\Theta$, we make $p$ false at $k$; such an assignment exists due
  to \textbf{(L$_2$)}. Second, we extend the assignment by making $L$
  true at $k$ if $L\in F^k_{\Phi}(\Psi)\cup
  P^{n-k}_{\Phi}(\Psi')$. Observe that we have $\{p,\neg p\}\nsubseteq
  F^k_{\Phi}(\Psi)\cup P^{n-k}_{\Phi}(\Psi')$: for otherwise $L\land
  \SVbox\Phi\models \nxt^k p$ and $\nxt^{n-k} L' \land
  \SVbox\Phi\models \neg p$, for some $L\in\Psi$ and $L'\in\Psi'$, whence $L\land \SVbox\Phi\models \nxt^n
  \neg L'$, contrary to~\textbf{(L$_1$)}. Also, by~\textbf{(L$_2$)}, any
  assignment extension at this stage does not contradict the choices
  made due to $\Theta$.  Finally, all propositional variables not
  covered in the previous two cases get their values from $\mathfrak{M}_\Psi$
  (or $\mathfrak{M}_{\Psi'}$). We note that the last choice does not depend on the
  assignment that is fixed by taking account of the consequences of
  $\SVbox\Phi$ with $\Psi$, $\Psi'$ and $\Theta$ (because if the value
  of a variable depended on those sets of literals, the respective
  literal would be among the logical consequences and would have been
  fixed before).\qed
\end{proof}

Thus, it suffices to show that conditions
\iref{(L$_1$)} and \iref{(L$_2$)} can be checked in polynomial time.
First, we claim that there is a polynomial-time algorithm which, given
a set $\Phi$ of binary clauses of the form $D_1\lor D_2$, 
constructs a set $\Phi^*$ of binary clauses that is `sound and
complete' in the following sense: 
\begin{description}
\item[(S$_1$)] $\SVbox\Phi^*\models \SVbox\Phi$;  
\item[(S$_2$)] if $ \SVbox\Phi\models \SVbox (L \to \nxt^k L_k)$ then
  either $k = 0$ and $L \to L_0\in\Phi^*$, or $k \geq 1$ and there are $L_0,L_1,\dots,L_{k-1}$  with $L = L_0$ and 
  $L_i\to\nxt L_{i+1}\in\Phi^*$, for $0 \leq i < k$.
\end{description}
Intuitively, the set $\Phi^*$ makes explicit the
consequences of $\SVbox \Phi$ and can be constructed in time $(2|\Phi|)^2$
(the number of temporal literals in $\Phi^*$ is bounded by the doubled length $|\Phi|$ of $\Phi$ as each of its literal  can only be prefixed by $\nxt$). Indeed, we
start from $\Phi$ and, at each step, add $D_1\lor D_2$ to $\Phi$ if
it contains both $D_1 \lor D$ and $\neg D\lor D_2$; we also add $L_1
\lor L_2$ if $\Phi$ contains $\nxt L_1 \lor \nxt L_2$ (and \emph{vice
versa}). This procedure is sound since we only add consequences of
$\SVbox\Phi$; completeness follows from the completeness proof for
temporal resolution~\cite[Section~6.3]{DBLP:journals/tocl/FisherDP01}.

Our next step is to encode $\Phi^*$ by means of unary automata.
Let $L$, $L'$ be literals. Consider a  nondeterministic finite automaton $\mathfrak{A}_{L,L'}$ over $\{0\}$  such that the literals of~$\Phi^*$ are its states, with $L$ being the initial state and $L'$ the only accepting state, and $\bigl\{(L_1,L_2) \mid L_1 \to \nxt L_2\in\Phi^* \bigr\}$ is its transition relation. By~\iref{(S$_1$)} and~\iref{(S$_2$)},  for all $k > 0$, we have
\begin{align*}
\mathfrak{A}_{L,L'} \text{ accepts } 0^k \quad\text{iff}\quad \SVbox\Phi\models \SVbox(L \to \nxt^k L'). 
\end{align*}
Then both $F^k_\Phi(\Psi)$ and $P^k_\Phi(\Psi')$ can be defined in terms of the
language of $\mathfrak{A}_{L,L'}$:
\begin{align*}
F_\Phi^k(\Psi) & = \bigl\{ L' \mid \mathfrak{A}_{L,L'} \text{ accepts } 0^k, \text{ for } L\in\Psi\bigr\},\\[-2pt]
P_\Phi^k(\Psi') & = \bigl\{ L \mid \mathfrak{A}_{\neg L,\neg L'} \text{ accepts } 0^k, \text{ for } L'\in\Psi'\bigr\}
\end{align*}
(recall that $\nxt^k L' \to L$ is equivalent to $\neg L \to \nxt^k \neg L'$). Note that the numbers $n$ and $k$ in conditions~\iref{(L$_1$)} and~\iref{(L$_2$)} are in general exponential in the length of $\Phi$ and, therefore, the automata $\mathfrak{A}_{L,L'}$ do not immediately provide a polynomial-time procedure for checking these conditions: although it can be shown that if~\iref{(L$_2$)}  does not hold then it fails for a polynomial number $k$, this is not the case for \iref{(L$_1$)}, which requires the accepting state to be reached in a fixed (exponential) number of transitions. Instead, we use the~\emph{Chrobak normal form}~\cite{chrobak-ufa} to decompose the automata into a polynomial number of polynomial-sized arithmetic progressions (which can have an exponential common period; cf.~the proof of Theorem~\ref{lem:bin-ltl:core-diamond}). In what follows,  given $a$ and $b$, we denote by $a + b\mathbb{N}$ the set $\{ a + b m \mid m\in\mathbb{N} \}$ (the arithmetic progression with initial term $a$ and common difference $b$).

It is known that every $N$-state unary automaton
$\mathfrak{A}$ can be converted (in polynomial time) into an equivalent automaton in Chrobak normal form (e.g., by using Martinez's algorithm~\cite{to-ufa}), which has $O(N^2)$ states and gives rise to
$M$ arithmetic progressions $a_1 + b_1\mathbb{N},\dots,a_M + b_M\mathbb{N}$ such that
\begin{description}
\item[(A$_1$)] $M\leq O(N^2)$ and $0 \leq a_i, b_i \leq N$, for $1 \leq i \leq
  M$;
\item[(A$_2$)] $\mathfrak{A}$ accepts $0^k$ iff $k \in a_i + b_i\mathbb{N}$,
for some $1 \leq i \leq M$. 
\end{description}
%
By construction, 
the
number of arithmetic progressions is bounded by a quadratic function
in the length of $\Phi$.

We are now in a position to give a
polynomial-time algorithm for checking~\iref{(L$_1$)} and~\iref{(L$_2$)}, 
which requires solving
Diophantine equations.  In~\iref{(L$_2$)}, for example, to verify that, for each $p\in\Theta$, we have $\neg p\notin F_\Phi^k(\Psi)$, for all $0 < k < n$, we take the automata $\mathfrak{A}_{L,\neg p}$, for $L\in\Psi$, and
transform them into the Chrobak normal form to obtain arithmetic
progressions $a_i+ b_i\mathbb{N}$, for $1 \leq i \leq M$. Then there is $k$,
$0 < k < n$, with $\neg p\in F_\Phi^k(\Psi)$ iff one of the equations
$a_i+b_i m=k$ has an integer solution, for some $k$, $0 < k < n$. The latter can
be verified by taking the integer $m = \lfloor -a_i/b_i\rfloor$
and checking whether either $a_i + b_im$ or $a_i + b_i(m + 1)$
belongs to the open interval $(0,n)$, which can clearly be done in
polynomial time.

This completes the proof of Theorem~\ref{lem:bin-ltl:krom-diamond-next-np}.
\qed
\end{proof}

The matching lower bound for $\coreLTL\Xallop$-formulas, even without $\Rbox/\Lbox$, can be obtained using \NP{}-hardness of deciding inequality of regular languages  over a unary alphabet~\cite{StockmeyerM73}. In the proof of Theorem~\ref{lem:bin-ltl:core-diamond}, we give a  more direct reduction of the \NP-complete problem 3SAT and repeat the argument of \cite[Theorem 6.1]{StockmeyerM73} to construct a small number of arithmetic progressions (each with a small initial term and common difference) that give rise to models of exponential size.

\begin{theorem}\label{lem:bin-ltl:core-diamond}
  The satisfiability problem for $\coreLTL\Xallop$-formulas is \NP-hard.
\end{theorem}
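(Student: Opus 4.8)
The plan is to reduce from 3SAT using the hint from the paragraph preceding the statement: build a $\coreLTL\Xallop$-formula whose models are forced to have exponential size, with the size encoding a solution to a given 3SAT instance via arithmetic progressions in the style of~\cite[Theorem~6.1]{StockmeyerM73}. Concretely, given a 3SAT instance over variables $x_1,\dots,x_n$, associate to $x_i$ the $i$-th prime $\pi_i$ (all primes are polynomial in $n$ since the $n$-th prime is $O(n\log n)$), and set $P = \prod_i \pi_i$, which is exponential in $n$. The idea is to force a model in which some propositional variable $g$ ("goal") is made true at moment $P$ and only at multiples of $P$ lying in a window $[0, P]$; reaching $g$ at moment $P$ is then used to certify that for every clause $c$ there is a literal in $c$ whose "coordinate" residue matches, which will be arranged so that an assignment exists iff the formula is satisfiable. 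Since every offset between consecutive "checkpoints" must be encodable by at most $\SVbox$, $\Rbox$, $\Lbox$ and $\nxt$ in binary clauses, the key trick is that residues modulo each small prime $\pi_i$ can be maintained by cyclic gadgets built from core clauses of the form $\neg r_i^{(j)} \lor \nxt r_i^{(j+1)}$ (indices taken cyclically mod $\pi_i$), each of which is a core clause (two literals, one positive).

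First I would set up, for each variable $x_i$, a cycle of fresh propositions $r_i^{(0)},\dots,r_i^{(\pi_i-1)}$ with core clauses $\neg r_i^{(j)} \lor \nxt r_i^{(j+1 \bmod \pi_i)}$ and $\neg r_i^{(j)} \lor \neg r_i^{(j')}$ for $j\neq j'$, plus $\Rbox$- or $\SVbox$-clauses forcing exactly one $r_i^{(j)}$ to hold at each moment; this makes $r_i^{(0)}$ true exactly at moments $\equiv 0 \pmod{\pi_i}$ (up to a fixed offset we can normalise to $0$ using the initial clause $\Psi$). Second, I would introduce a proposition $\mathit{all}$ with a core clause chain forcing $\mathit{all}$ to be true precisely when every $r_i^{(0)}$ holds, i.e.\ at moments that are multiples of $P$; the subtlety is that $\mathit{all}$ being a conjunction of $n$ literals is not itself a core clause, so I would instead encode "$r_i^{(0)}$ holds for all $i \le k$ at the current moment" incrementally by auxiliary variables $\mathit{conj}_k$ with core clauses $\neg \mathit{conj}_{k-1} \lor \neg r_k^{(0)} \lor \dots$; here one must be careful to stay binary with at most one positive literal, which forces using contrapositives and passing the conjunction along the time axis rather than within a single clause. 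Third, I would encode the 3SAT instance: for each clause $c_\ell = (\ell_{\ell,1} \lor \ell_{\ell,2} \lor \ell_{\ell,3})$, arrange that the first time $\mathit{all}$ becomes true, i.e.\ at moment $P$, a "clause-satisfied" proposition $s_\ell$ is true, where $s_\ell$ is triggered by a nondeterministically guessed truth value of each $x_i$ encoded by a proposition $v_i$ (global, via $\SVbox$), together with the residue gadgets; satisfiability of the 3SAT instance corresponds to the existence of a consistent guess of the $v_i$ making all $s_\ell$ reachable, and the $\Rdiamond$/$\Rbox$ mechanism (using $\neg p \lor \neg\Rbox q$ and $q \lor r$ as noted in the introduction for the Krom case, here restricted to core form) forces the model to actually extend far enough in time to witness everything.

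The main obstacle, and the step I expect to require the most care, is realising all the required control logic with \emph{core} clauses only — binary clauses with at most one positive literal — since the natural encodings (conjunctions of several residue conditions, disjunctions over the three literals of a clause) are neither binary nor Horn-with-bounded-width on their face. The resolution is the same device used by Stockmeyer and Meyer: instead of asserting a conjunction/disjunction directly, one propagates a single "token" proposition along the $\nxt$-axis through the cyclic gadgets so that the token survives to moment $P$ iff all the residue coincidences hold, and one uses a separate token per clause; each propagation step is a single implication $\neg(\text{token}) \lor \nxt(\text{token}')$ or $\neg(\text{token}) \lor \nxt(\text{token})$ guarded by the unary residue literals through intermediate fresh variables, all of which can be put in core form. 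I would also need to double-check that the "guessing" of the assignment $v_i$ really is left free by the construction (so that the core-formula is satisfiable exactly when some assignment works), and that the forced model size $P$ is genuinely exponential, which follows from the prime number theorem bound on $\prod_{i\le n}\pi_i = e^{(1+o(1))n\ln n}$. Once these gadgets are in place, the reduction is clearly polynomial-time computable, so 3SAT reduces to $\coreLTL\Xallop$-satisfiability and \NP-hardness follows.
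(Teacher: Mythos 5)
Your starting point --- exponentially large models certified by polynomially many small arithmetic progressions, one prime per variable, in the style of Stockmeyer and Meyer --- is the right one, but the reduction you build on top of it has gaps that are not matters of detail. The central problem is that you encode the truth assignment in globally guessed propositions $v_i$ (asserted via $\SVbox$) and then try to \emph{check} each 3-clause against them. Checking $\ell_{1}\lor\ell_{2}\lor\ell_{3}$ over global propositions requires a ternary disjunction with up to three positive literals, which is exactly what core clauses forbid; if you could express that directly, there would be nothing to prove. The same issue recurs in your accumulator $\neg \mathit{conj}_{k-1}\lor\neg r_k^{(0)}\lor\mathit{conj}_k$ (ternary), in the ``at least one $r_i^{(j)}$ holds'' constraint (a $\pi_i$-ary positive disjunction), and in your proposed use of $\neg p\lor\neg\Rbox q$ together with $q\lor r$ to force the model to extend far enough: $q\lor r$ has two positive literals, so it is Krom but not core --- the introduction offers that trick only for $\kromLTL\Xbox$. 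Your escape hatch (``propagate a token along the $\nxt$-axis through intermediate fresh variables'') is precisely where all the content would have to live, and it is not supplied.

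The paper's proof avoids every one of these obstacles by a different allocation of roles: the assignment is not guessed in propositions at all, but is the residue vector of a single \emph{position} $k>0$. For each clause one writes down \emph{one} arithmetic progression (with common difference the product of that clause's three primes) covering exactly the positions whose residues falsify the clause, plus the progressions $j+P_i\cdot\mathbb{N}$ for $2\le j<P_i$ covering positions that represent no assignment at all; each progression $a+b\mathbb{N}$ is realised by a chain of length $a$ feeding a cycle of length $b$ built solely from core clauses $\SVbox(u_{j-1}\to\Rnext u_j)$, $\SVbox(v_{j-1}\to\Rnext v_j)$, $\SVbox(v_b\to v_0)$, $\SVbox(v_0\to d)$, all of which mark a single ``bad'' proposition $d$. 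The conjunction over clauses becomes ``$k$ lies outside all progressions, i.e.\ $d$ is false at $k$,'' and the existence of such a $k$ is forced by the core clause $\SVbox(\neg\SVbox d)$, with $p\land\SVbox(\Rnext p\to p)\land\SVbox(p\to d)$ neutralising the past. If you want to repair your proposal, the key missing idea is this inversion: do not verify a guessed assignment with core clauses (you cannot), but turn the falsifying positions, clause by clause, into arithmetic progressions and demand a position outside their union.
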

\begin{proof} 
The proof is by reduction of 3SAT. 
Let $f = \bigwedge_{i = 1}^n C_i$ be a 3CNF with variables $p_1,\dots,p_m$ and clauses $C_1,\dots,C_n$. By a propositional assignment for $f$ we understand a function $\sigma \colon \{p_1, \dots, p_m\} \to \{0,1\}$. We  represent such assignments by sets of positive natural numbers. More precisely, 
let $P_1,\dots,P_m$ be the first $m$ prime numbers; it is known that $P_m$ does not exceed $O(m^2)$~\cite{Apostol76}.  A natural number $k>0$ is said to \emph{represent} an assignment $\sigma$ if 
$k$ is equivalent to $\sigma(p_i)$ modulo $P_i$, for all $i$, $1 \leq i \leq m$.  
Clearly, not every natural number represents an assignment since each element of
\begin{equation}\label{eq:NP:AP-not}
j + P_i\cdot\mathbb{N},\qquad\text{ for } 1 \leq i \leq m \text{ and } 2 \leq j < P_i,
\end{equation}
is equivalent to $j$ modulo $P_i$ with $j \geq 2$. On the other hand, every natural number that does not represent an assignment belongs to one of those arithmetic progressions  (see Fig.~\ref{fig:stockmeyer}).

\begin{figure}[t]
\centering{%
\tabcolsep=2pt\scriptsize%
\newcolumntype{g}{>{\columncolor{gray!30}}c}
\begin{tabular}{c|g|cccc|g|ccc|g|cccc|g|g|cccc|g|ccc|g|cccc|g}
& \bf 1 & 2 & 3 & 4 &  5 & \bf 6 & 7 & 8 & 9 & \bf 10 & 11 & 12 & 13 & 14 & \bf 15 & \bf 16 & 17 & 18 & 19 &  20 & \bf 21 & 22 & 23 & 24 & \bf 25 & 26 & 27 & 28 & 29 & \bf 30 \\\hline
2 & 1 & 0 & 1 & 0 & 1 & 0 & 1 & 0 & 1 & 0 & 1 & 0 & 1 & 0 & 1 & 0 & 1 & 0 & 1& 0 & 1 & 0 & 1 & 0 & 1 & 0 & 1 & 0 & 1 & 0\\
3 & 1 &  & 0 & 1 &  & 0 & 1 &  & 0 & 1 &  & 0 & 1 &  & 0 & 1 &  & 0 & 1 &  & 0 & 1 &  & 0 & 1 &  & 0 & 1 &  & 0\\
5 & 1 &  &  &  &  0 & 1 &  &  &  &  0 & 1 &  &  &  & 0 & 1 &  &  &  & 0 & 1 &  &  &  & 0 & 1 &  &  &  & 0\\
\end{tabular}
}
\caption{Positive numbers encoding assignments for 3 variables $p_1,p_2,p_3$ (shaded).}\label{fig:stockmeyer}
\end{figure}

Let $C_i$ be a clause in $f$, say, $C_i = p_{i_1} \lor \neg p_{i_2}  \lor p_{i_3}$. Consider 
\begin{equation}\label{eq:NP:AP-clause}
P_{i_1}^{1} P_{i_2}^{0} P_{i_3}^{1} + P_{i_1}P_{i_2}P_{i_3}\cdot \mathbb{N}.
\end{equation}
A natural number represents an assignment that makes $C_i$ true iff it does not belong to the progressions~\eqref{eq:NP:AP-not} and \eqref{eq:NP:AP-clause}.  In the same way we construct a progression of the form~\eqref{eq:NP:AP-clause} for every clause in $f$. 
Thus, a natural number $k > 0$ \emph{does not} belong to the constructed progressions of the form~\eqref{eq:NP:AP-not} and~\eqref{eq:NP:AP-clause} iff $k$ represents a satisfying assignment for $f$.
 
To complete the proof, we show that the defined progressions can be encoded in $\coreLTL\Xallop$. Take a propositional variable $d$ (it will be shared by  all formulas below).
Given an arithmetic progression $a+b\mathbb{N}$ (with $a\geq 0$ and $b > 0$), let
\begin{multline*}
\theta_{a,b} \ \ = \ \ u_0  \land\bigwedge\nolimits_{j = 1}^a \SVbox (u_{j-1} \to \Rnext u_j) \land{}\\[-4pt] \SVbox (u_a \to v_0) \land \bigwedge\nolimits_{j = 1}^b \SVbox (v_{j-1} \to \Rnext v_j) \land \SVbox(v_b \to v_0) \land \SVbox(v_0 \to d),
\end{multline*}
where $u_0,\dots,u_a$ and $v_0,\dots,v_b$ are fresh propositional variables.
It is not hard to see that,  in every model of $\theta_{a,b}$, if $k$ belongs to $a + b\mathbb{N}$, then $d$ is true at moment $k$. 
Thus, we take a conjunction $\varphi_f$ of the $\theta_{a,b}$ for  arithmetic progressions~\eqref{eq:NP:AP-not} and~\eqref{eq:NP:AP-clause} together with $p \land \SVbox(\Rnext p \to p) \land \SVbox (p \to d) \land \SVbox (\neg \SVbox d)$,
where $p$ is a fresh variable (the last formula makes both $p$ and $d$ true at all moments $k \le 0$). The size of the
$\coreLTL\Xallop$-formula  $\varphi_f$ is $O(n\cdot m^6)$.  It is readily checked
that $\varphi_f$ is satisfiable iff $f$ is satisfiable.
\qed
\end{proof}


\section{Core and Horn Fragments without Next-Time}

Let $\varphi$ be an $\hornLTL\Xbox$-formula.  By applying Lemma~\ref{restricted},  we  can transform $\varphi$ to the form $\Psi\land \SVbox \Phi^+ \land \SVbox \Phi^-$, where $\Psi$ is a set of propositional variables while $\Phi^+$ and $\Phi^-$ are sets of \emph{positive} and \emph{negative clauses} of the form
\begin{equation}\label{eq:horn-clauses}
\lambda_1 \land \lambda_2 \land 
\dots \land \lambda_{k-1} \to \lambda_k \quad\text{ and }\quad \neg \lambda_1 \lor \neg \lambda_2 \lor \dots \lor \neg \lambda_k, 
\end{equation}
respectively.
Trivially, $\Psi\land\SVbox\Phi^+$ is satisfiable. Since all clauses in $\Phi^+$ have at most one positive literal and are constructed from variables possibly prefixed by $\Rbox$ or $\Lbox$, the formula $\Psi\land\SVbox\Phi^+$ has a \emph{canonical model} $\K$ defined by taking
\begin{equation*}
\K,n\models p \quad\text{ iff }\quad \M,n\models p,\ \ \text{ for every model } \M \text{ of } \Psi\land\SVbox\Phi^+, \ n \in\mathbb{Z}
\end{equation*}
(indeed, $\K,0\models \Psi\land\SVbox\Phi^+$ follows from the observation that  $\K,n\models \Rbox p$ iff $\M,n\models \Rbox p$,  for every model $\M$ of $\Psi\land\SVbox\Phi^+$; and similarly for $\Lbox p$). 
If we consider the canonical model $\K$ in the context of
Lemma~\ref{l:structure} then, since the language
does not contain $\Rnext$ or $\Lnext$, we have $m_{i+1} -
m_i = 1$ for all $i$. Thus, $\K$ can be thought
of as a sequence of $(\ell_\FF - \ell_\PP +1)$-many states, the first and last of which repeat
indefinitely. Let $K = |\varphi| + 4$. 

\begin{figure}[t]
\centering
\begin{tikzpicture}[xscale=0.75,yscale=0.7,>=latex,point/.style={circle,draw=black,minimum size=1.8mm,inner sep=0pt}]
\footnotesize%
\node[point,label=above:{\scriptsize $-K$}] (a1) at (0,0) {};
\node[point] (a2) at (1.5,0) {};
\node[point] (a3) at (3,0) {};
\node[point] (a4) at (4.5,0) {};
\node[point] (a5) at (6,0) {};
\node[point,,label=above:{\scriptsize$K$}] (a6) at (7.5,0) {};
\draw[->,thick] (a1) -- (a2);
\draw[->,thick] (a2) -- (a3);
\draw[dashed,thick] (a3) -- (a4);
\draw[->,thick] (a4) -- (a5);
\draw[->,thick] (a5) -- (a6);
\node[point] (b) at (-3,-1) {};
\node[point] (b0) at (-1.5,-1) {};
\node[point,label=below:{\scriptsize$-K$}] (b1) at (0,-1) {};
\node[point] (b2) at (1.5,-1) {};
\node[point] (b3) at (3,-1) {};
\node[point] (b4) at (4.5,-1) {};
\node[point] (b5) at (6,-1) {};
\node[point,label=below:{\scriptsize$K$}] (b6) at (7.5,-1) {};
\node[point] (b7) at (9,-1) {};
\node[point] (b8) at (10.5,-1) {};
\draw[dashed] (b) -- ++(-1,0);
\draw[->] (b) -- (b0);
\draw[->] (b0) -- (b1);
\draw[->] (b1) -- (b2);
\draw[->] (b2) -- (b3);
\draw[dashed] (b3) -- (b4);
\draw[->] (b4) -- (b5);
\draw[->] (b5) -- (b6);
\draw[->] (b6) -- (b7);
\draw[->] (b7) -- (b8);
\draw[dashed] (b8) -- ++(1,0);
\draw[gray!50,line width=1mm,->] (a1) -- (b);
\draw[gray!50,line width=1mm,->] (a1) -- (b0);
\draw[gray!50,line width=1mm,->] (a1) -- (b1);
\draw[gray!50,line width=1mm,->] (a2) -- (b2);
\draw[gray!50,line width=1mm,->] (a3) -- (b3);
\draw[gray!50,line width=1mm,->] (a4) -- (b4);
\draw[gray!50,line width=1mm,->] (a5) -- (b5);
\draw[gray!50,line width=1mm,->] (a6) -- (b6);
\draw[gray!50,line width=1mm,->] (a6) -- (b7);
\draw[gray!50,line width=1mm,->] (a6) -- (b8);
\draw[rounded corners=3mm,<-,densely dotted,thick] (a1) -- ++(-1,0.5) -- ++(0,-1) -- (a1);
\draw[rounded corners=3mm,<-,densely dotted,thick] (a6) -- ++(1,0.5) -- ++(0,-1) -- (a6);
\node at (11,-0.6) {\large $\K$};
\node at (4,0.5) {minimal model of $\Sigma_\varphi$};
\end{tikzpicture}
\caption{The minimal model of $\Sigma_\varphi$ and $\K$.}\label{fig:horn}
\end{figure}
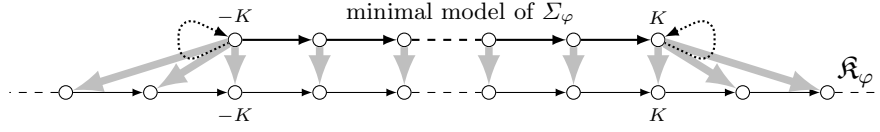

Obviously, $\varphi$ is satisfiable iff there is no negative clause $
\neg \lambda_1 \lor \dots \lor \neg \lambda_k$ in $\Phi^-$ such that all the $\lambda_i$
are true in $\K$ at some moment $n$ with $|n| \leq K$. This condition can be encoded by
means of propositional Horn clauses in the following way. 
For each variable $p$, we take $2K+1$ variables $p^n$,  $|n| \leq K$, and, for each $\Rbox p$ and $\Lbox p$, we take $2K+1$ variables, denoted $(\Rbox
p)^n$ and $(\Lbox p)^n$,  $|n| \leq K$,
respectively. Consider the following set $\Sigma_\varphi$ of
propositional Horn clauses, $|n| \leq K$:
\begin{align*}
\text{\iref{(H$_0$)}}\hspace*{1em}  & p^0, \quad \text{if } p\in\Psi,\\
\text{\iref{(H$_1$)}}\hspace*{1em}  & \lambda_1^n\land \dots\land \lambda_{k-1}^n \to \lambda_k^n, \quad\text{if } (\lambda_1\land \dots\land\lambda_{k-1} \to \lambda_k) \in \Phi^+,\hspace*{-12em} \\ 
%
%
\text{\iref{(H$_2$)}}\hspace*{1em}  & (\Rbox p)^n \to (\Rbox p)^{n+1}\ \ \  {\text{\footnotesize if $n < K$}}, && (\Lbox p)^n \to (\Lbox p)^{n-1} \ \ \  {\text{\footnotesize if $n > -K$}},\\ 
\text{\iref{(H$_3$)}}\hspace*{1em}  & (\Rbox p)^n \to p^{n+1}, && (\Lbox p)^n \to p^{n-1},\\ 
\text{\iref{(H$_4$)}}\hspace*{1em}  & (\Rbox p)^n \land p^n \to (\Rbox p)^{n-1}\  {\text{\footnotesize if $n > -K$}},
&&
(\Lbox p)^{n} \land p^{n} \to (\Lbox p)^{n+1} \ {\text{\footnotesize if $n < K$}},\\
\text{\iref{(H$_5$)}}\hspace*{1em}  & (\Rbox p)^{K} \leftrightarrow p^K, && (\Lbox p)^{-K} \leftrightarrow p^{-K},\\
\text{\iref{(H$_6$)}}\hspace*{1em} & (\Rbox p)^{-K} \leftrightarrow p^{-K}, && (\Lbox p)^{K} \leftrightarrow p^K.
\end{align*}
Clearly, $|\Sigma_\varphi| \le O(|\varphi|^2)$. It is readily seen that the minimal model of $\Sigma_\varphi$ corresponds to the canonical model $\K$ as shown in Fig.~\ref{fig:horn}. As propositional Horn satisfiability is \PTime-complete,  we obtain the following:
\begin{theorem}\label{thm:hornLTL}
The satisfiability problem for $\hornLTL\Xbox$-formulas is in \PTime.
\end{theorem}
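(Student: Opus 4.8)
The plan is to turn the satisfiability of the $\hornLTL\Xbox$-formula $\varphi$, rewritten by Lemma~\ref{restricted} as $\Psi \land \SVbox\Phi^+ \land \SVbox\Phi^-$, into propositional Horn satisfiability for the set $\Sigma_\varphi$ of clauses \iref{(H$_0$)}--\iref{(H$_6$)}, whose size is $O(|\varphi|^2)$. Since propositional Horn satisfiability is in \PTime, this reduction gives the bound, and the only thing to establish is its correctness: $\varphi$ is satisfiable iff the minimal model of $\Sigma_\varphi \cup (\text{clauses from } \Phi^-)$ is consistent, or equivalently, iff no negative clause of $\Phi^-$ is triggered in the minimal model of $\Sigma_\varphi$ at any position $n$ with $|n| \le K$.

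First I would make precise the correspondence claimed in Fig.~\ref{fig:horn}: letting $\mathfrak{N}$ be the unique minimal model of the Horn set $\Sigma_\varphi$, I would show $\mathfrak{N} \models p^n$ iff $\K, n \models p$ for all variables $p$ and all $|n| \le K$, and likewise $\mathfrak{N}\models (\Rbox p)^n$ iff $\K,n\models\Rbox p$, $\mathfrak{N}\models (\Lbox p)^n$ iff $\K,n\models\Lbox p$. The $\Leftarrow$ direction (soundness) is the routine part: every clause of $\Sigma_\varphi$ is a valid consequence of $\Psi\land\SVbox\Phi^+$ read off the states $-K,\dots,K$ of $\K$, using that in $\K$ consecutive states are one step apart (no $\Rnext,\Lnext$), that the states $-K$ and $K$ repeat indefinitely (this is exactly what \iref{(H$_5$)} and \iref{(H$_6$)} encode: $\Rbox p$ holds at $K$ iff $p$ does, since everything beyond $K$ copies state $K$, and symmetrically), and that \iref{(H$_2$)}--\iref{(H$_4$)} are the obvious fixed-point unfoldings of $\Rbox p \leftrightarrow p \land \Rnext\Rbox p$ restricted to the window. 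The $\Rightarrow$ direction (completeness/minimality) is the place to be careful: one must argue that the minimal model of $\Sigma_\varphi$ does not make any $p^n$ true "spuriously," i.e. every $p^n$ forced by $\Sigma_\varphi$ is indeed forced in $\K$. The clean way is to exhibit a model of $\Psi\land\SVbox\Phi^+$ whose restriction to $[-K,K]$ refutes any given $p^n$ not derivable in $\Sigma_\varphi$; since $\K$ is defined as the intersection of all such models, this shows $\K,n\not\models p$, and hence $\K$ restricted to the window is captured exactly by the minimal Horn model.

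The main obstacle will be handling the $\Rbox/\Lbox$-surrogates consistently with the fact that the model is infinite but the Horn encoding only has a window $[-K,K]$: one has to confirm that $K = |\varphi|+4$ states genuinely suffice, i.e. that the canonical model really does become periodic (constant, in fact, past the window) — this is where the structural Lemma~\ref{l:structure} with $m_{i+1}-m_i = 1$ is invoked, giving $\ell_\FF - \ell_\PP + 1 \le K$ distinct states in the relevant core of $\K$. Granting that, clauses \iref{(H$_5$)} and \iref{(H$_6$)} correctly "close off" the two ends so that, for instance, $(\Rbox p)^{-K}\leftrightarrow p^{-K}$ holds because to the left of $-K$ nothing new can happen. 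Once the correspondence $\mathfrak{N}\leftrightarrow\K$ is in place, I would finish by noting that $\varphi$ is satisfiable iff $\K$ satisfies $\SVbox\Phi^-$, iff for each negative clause $\neg\lambda_1\lor\dots\lor\neg\lambda_k$ in $\Phi^-$ and each $n$ with $|n|\le K$ not all of $\lambda_1^n,\dots,\lambda_k^n$ are in $\mathfrak{N}$ — and this last check is itself a polynomial number of lookups in the minimal Horn model, all computable in \PTime.
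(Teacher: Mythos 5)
Your proposal follows the paper's own proof in every structural respect: the paper likewise reduces to propositional Horn satisfiability via $\Sigma_\varphi$ and rests correctness on the claim that the minimal model of $\Sigma_\varphi$ coincides with the canonical model $\K$ on the window $[-K,K]$ (a claim the paper leaves as ``readily seen''). Your list of what must be verified --- soundness of each clause of $\Sigma_\varphi$ in $\K$, minimality, and the sufficiency of the window via Lemma~\ref{l:structure} --- is the right list.

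There is, however, a genuine problem at exactly the point you dismiss as routine: the soundness of \iref{(H$_6$)}. Eventual constancy of $\K$ to the left of $-K$ justifies the implication $(\Rbox p)^{-K}\to p^{-K}$, but not its converse $p^{-K}\to(\Rbox p)^{-K}$, which asserts that $p$ true at $-K$ forces $p$ at \emph{all later} moments. That converse is false in $\K$: take $\Psi=\{q,r\}$ and $\Phi^+=\{q\to\Lbox p,\ \Rbox p\to s\}$; then $\K$ makes $p$ true exactly at the negative moments, so $p$ holds at $-K$ while $\Rbox p$ does not. With the full biconditional, the minimal model of $\Sigma_\varphi$ derives $(\Rbox p)^{-K}$ from $p^{-K}$ and then, via \iref{(H$_2$)} and \iref{(H$_3$)}, $p^n$ for every $n$ in the window --- strictly more than $\K$ --- and the reduction would wrongly reject the satisfiable formula $q\land r\land\SVbox(q\to\Lbox p)\land\SVbox(\Rbox p\to s)\land\SVbox(\neg p\lor\neg r)$. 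So the check you defer (``every clause of $\Sigma_\varphi$ is a valid consequence of $\Psi\land\SVbox\Phi^+$ read off the states of $\K$'') fails for this clause, and your stated justification (``to the left of $-K$ nothing new can happen'') yields only one direction. The correspondence, and hence the theorem, does go through once \iref{(H$_6$)} is weakened to the two forward implications $(\Rbox p)^{-K}\to p^{-K}$ and $(\Lbox p)^{K}\to p^{K}$: these are sound, they are exactly what is needed to extend a model of $\Sigma_\varphi$ to a temporal model by repeating the boundary states, and they do not disturb minimality. Your write-up should make this correction explicit rather than taking the displayed biconditional at face value.
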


\subsection{Temporal Derivations for $\coreLTL\Xbox$ in NLogSpace}\label{sec:nlogspace}

In $\coreLTL\Xbox$-formulas, all clauses are binary: $k = 2$ in~\eqref{eq:horn-clauses}. Satisfiability of propositional binary clauses is known to be \NLogSpace-complete. However, in the reduction $\varphi \mapsto \Sigma_\varphi$ above, the clauses \iref{(H$_4$)} are ternary. In this section we show how to modify the reduction to ensure membership in \NLogSpace{}.  More precisely,  we define two types of derivation from $\Psi\land\SVbox \Phi^+$: a 0-derivation of $(\lambda, n)$ will mean that $\K,n \models\lambda$, while a $\forall$-derivation of $\lambda$ from $\lambda'$ that $\K,0 \models \SVbox \lambda' \to \SVbox \lambda$. We then show that these derivations define $\K$ and that satisfiability of $\varphi$ can be checked  by a nondeterministic algorithm in  logarithmic space.

Denote by $\to^*$ the transitive and reflexive closure of the relation $\to$ over literals given by the clauses of $\Phi^+$. We require the following derivation rules over the pairs $(\lambda,n)$, where $\lambda$ is a positive temporal literal in $\varphi$ and $n\in\Z$:
\begin{align*}
\text{\iref{(R$_1$)}}\hspace*{1em} &  (\lambda_1, n) \Rightarrow (\lambda_2,n),\quad \text{if } \lambda_1 \to^* \lambda_2,\hspace*{-6em}\\ 
\text{\iref{(R$_2$)}}\hspace*{1em} & (\Rbox p, n) \Rightarrow (\Rbox p,n+1), && (\Lbox p, n) \Rightarrow (\Lbox p,n-1),\\
\text{\iref{(R$_3$)}}\hspace*{1em} & (\Rbox p, n) \Rightarrow (p,n+1), &&(\Lbox p, n) \Rightarrow (p,n-1),\\
\text{\iref{(R$_4$)}}\hspace*{1em} & (\Rbox p, 0) \Rightarrow (\Rbox p,-1), &&  (\Lbox p, 0) \Rightarrow (\Lbox p,1),\quad \text{if } p' \to^* p \text{ for } p'\in\Psi,\\ 
\text{\iref{(R$_5$)}}\hspace*{1em} & (p, n) \Rightarrow (\Rbox p,n-1), && (p, n) \Rightarrow (\Lbox p,n+1).
\end{align*}
The rules in \iref{(R$_1$)}--\iref{(R$_4$)} mimic~\iref{(H$_1$)}--\iref{(H$_4$)} above (\iref{(H$_4$)}  at moment 0 only) and reflect the semantics of \PTL{} in the sense that whenever $(\lambda,n) \Rightarrow (\lambda', n')$ and $\K,n \models \lambda$ then $\K,n' \models \lambda'$. 
For example, consider~\iref{(R$_4$)}. It only applies if $p$ follows (by $\to^*$) from the initial conditions in $\Psi$, in which case $\K, 0 \models p$, and so $\K, 0 \models \Rbox p$ implies $\K, -1 \models \Rbox p$. The rules in \iref{(R$_5$)} are different: for instance, we can only  apply $(p, n) \Rightarrow (\Rbox p,n-1)$ if we know that $p$ holds at all $m \ge n$. 

A sequence $\der\colon (\lambda_0,n_0) \Rightarrow \dots \Rightarrow  (\lambda_\ell,n_\ell)$, for $\ell \ge 0$, is called a \emph{$0$-derivation of $(\lambda_\ell,n_\ell)$}  if $\lambda_0 \in \Psi$, $n_0 =0$ and all applications of \iref{(R$_5$)} are \emph{safe} in the following sense: for any $(p, n_i) \Rightarrow_{\iref{(R$_5$)}} (\Rbox p,n_i-1)$, there is $\lambda_j = \Rbox q$, for some $q$ and $0 \le j < i$; similarly, for any $(p, n_i) \Rightarrow_{\iref{(R$_5$)}} (\Lbox p,n_i+1)$, there is $\lambda_j = \Lbox q$ with $0 \le j < i$. In this case we write $\Psi \Rightarrow^0 (\lambda_\ell,n_\ell)$.  For example, consider 
\begin{equation*}
\varphi \ \ = \ \ p \ \ \land \ \ \SVbox (p \to \Rbox q) \ \ \land \ \ \SVbox (q \to r) \ \ \land \ \ \SVbox (p \to r).
\end{equation*}
Evidently, $\K,-1 \models \Rbox r$. 
The following sequence is a 0-derivation of $(\Rbox r,-1)$  because the application of \iref{(R$_5$)} is safe due to $\Rbox q$:
\begin{equation*}
(p,0) \Rightarrow_{\iref{(R$_1$)}} (\Rbox q,0) \Rightarrow_{\iref{(R$_3$)}} (q,1) \Rightarrow_{\iref{(R$_1$)}} (r,1) \Rightarrow_{\iref{(R$_5$)}} (\Rbox r,0) \Rightarrow_{\iref{(R$_4$)}} (\Rbox r,-1).
\end{equation*}
Intuitively, if we can derive $(r,1)$ using $(\Rbox q,0)$, then we can also derive $(r,n)$ for any $n\geq 1$, and so we must also have $(\Rbox r, 0)$, which justifies the application of~\iref{(R$_5$)}. This argument is formalised in the following lemma:
\begin{lemma}[monotonicity]\label{lemma:monotone}
Let $\der$ be a 0-derivation of $(\lambda_\ell,n_\ell)$ with a suffix
\begin{equation}\label{eq:suff}
\mathfrak s \colon (\Rbox q, n_s) \Rightarrow (\lambda_{s+1},n_{s+1}) \Rightarrow \dots \Rightarrow (\lambda_\ell,n_\ell),
\end{equation}
where none of the $\lambda_i$ contains $\Rbox$. Then $\Psi\Rightarrow^0 (\lambda_\ell,m)$, for all $m \geq n_\ell$. Similarly, if there is a suffix beginning with some $\Lbox q$ then $\Psi\Rightarrow^0 (\lambda_\ell,m)$, for all $m \leq n_\ell$. Moreover, these 0-derivations only contain the  rules used in $\der$ and~\iref{(R$_2$)}.
\end{lemma}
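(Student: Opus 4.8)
The plan is to argue by induction on the length of the suffix $\mathfrak s$ in~\eqref{eq:suff}, establishing the following strengthened claim: if $\mathfrak s\colon (\Rbox q, n_s) \Rightarrow (\lambda_{s+1},n_{s+1}) \Rightarrow \dots \Rightarrow (\lambda_\ell,n_\ell)$ is a suffix of a 0-derivation in which no $\lambda_i$ (for $s \le i \le \ell$) contains $\Rbox$, then for every $m \ge n_\ell$ there is a 0-derivation of $(\lambda_\ell, m)$ whose rules are among those already used in $\der$ together with~\iref{(R$_2$)}. The key observation driving the induction is that, along such a suffix, the only rules that can actually fire are \iref{(R$_1$)}, \iref{(R$_3$)}, \iref{(R$_4$)} and \iref{(R$_5$)}: rule \iref{(R$_2$)} for $\Rbox$ is excluded because it would introduce an $\Rbox$-literal, and the remaining rules either do not change the time coordinate (\iref{(R$_1$)}) or decrease it by one step from a $\Box$-literal (\iref{(R$_3$)}, \iref{(R$_4$)} for $\Rbox$; note \iref{(R$_3$)} requires the source literal to be $\Rbox p$, which cannot occur after position $s$). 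So in fact the suffix beyond position $s$ can only consist of one application of \iref{(R$_3$)} turning $\Rbox q$ into $(q, n_s+1)$, followed by a chain of \iref{(R$_1$)} steps (which never touch the time coordinate), possibly concluded by a \iref{(R$_5$)} step — but a \iref{(R$_5$)} step would reintroduce a $\Box$-literal, contradicting the hypothesis on the $\lambda_i$. Hence the suffix has the shape $(\Rbox q, n_s) \Rightarrow_{\iref{(R$_3$)}} (q, n_s+1) \Rightarrow_{\iref{(R$_1$)}} \cdots \Rightarrow_{\iref{(R$_1$)}} (\lambda_\ell, n_s+1)$ with $n_\ell = n_s + 1$.

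Given this shape, the construction of the 0-derivation of $(\lambda_\ell, m)$ for $m \ge n_\ell$ is immediate: take the original derivation $\der$ up to the pair $(\Rbox q, n_s)$, then apply \iref{(R$_2$)} repeatedly to get $(\Rbox q, n_s+1), (\Rbox q, n_s+2), \dots, (\Rbox q, m)$, then apply \iref{(R$_3$)} to get $(q, m+1)$ — wait, this gives the wrong coordinate; instead we want $(\Rbox q, m-1) \Rightarrow_{\iref{(R$_3$)}} (q, m)$, so we apply \iref{(R$_2$)} to reach $(\Rbox q, m-1)$ and then mimic the $\iref{(R$_3$)}$ and $\iref{(R$_1$)}$ steps of the original suffix with the shifted time coordinate $m$ in place of $n_\ell$. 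Since \iref{(R$_1$)} steps are insensitive to the time coordinate, the same chain $q \to^* \lambda_\ell$ yields $(q,m) \Rightarrow_{\iref{(R$_1$)}} \cdots \Rightarrow_{\iref{(R$_1$)}} (\lambda_\ell, m)$. The safety condition on any \iref{(R$_5$)} applications in the prefix of $\der$ is inherited verbatim, since we have not altered that prefix, and we have introduced no new \iref{(R$_5$)} applications; the only new rule used is \iref{(R$_2$)}, as required. The symmetric statement for a suffix beginning with $\Lbox q$ is proved in exactly the same way, replacing $\Rbox$ by $\Lbox$, $n+1$ by $n-1$, and $m \ge n_\ell$ by $m \le n_\ell$.

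The main subtlety — and the step I would be most careful about — is the claim that the suffix $\mathfrak s$ really must have the restricted syntactic shape described above, i.e. exactly one \iref{(R$_3$)}/\iref{(R$_4$)} step peeling off the $\Box$ followed by pure \iref{(R$_1$)} propagation. One has to check that rule \iref{(R$_4$)} (which also peels $\Rbox$ but only at moment $0$, yielding $(\Rbox p, -1)$, an $\Rbox$-literal) cannot occur after position $s$, since its output contains $\Rbox$; that \iref{(R$_2$)} for the $\Rbox$ direction is likewise excluded; and that once we are at a bare propositional literal $(q, \cdot)$ with no $\Box$, the only applicable rules are \iref{(R$_1$)} (harmless) and \iref{(R$_5$)} (forbidden by the no-$\Rbox$ hypothesis, in the $\Rbox$ case, and by symmetry excluded in the relevant direction for the $\Lbox$ case — here one must be slightly careful, since $(q,n)\Rightarrow_{\iref{(R$_5$)}}(\Lbox q, n+1)$ does not contain $\Rbox$, but it does contain $\Lbox$, and in the first part of the lemma we only forbade $\Rbox$; so strictly one should observe that such a step would make the suffix one that begins with $\Rbox q$ but later contains $\Lbox$, and then re-derivation is still possible, or one simply notes that the statement as used downstream only needs the case where the suffix is $\Box$-free after its first element — in any event this boundary case must be addressed explicitly rather than waved away). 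Once the shape of $\mathfrak s$ is pinned down, the rest is the routine coordinate-shifting argument sketched above.
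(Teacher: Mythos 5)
There is a genuine gap, and it sits exactly where you flagged your own unease. Your structural claim --- that the suffix $\mathfrak s$ must consist of a single \iref{(R$_3$)} step followed by a chain of \iref{(R$_1$)} steps, so that $n_\ell=n_s+1$ --- is false. The hypothesis forbids only $\Rbox$ in the $\lambda_i$, not $\Lbox$. Literals $\Lbox r$ can enter the suffix via \iref{(R$_1$)} (a clause with head $\Lbox r$ in $\Phi^+$) or via a safe application of \iref{(R$_5$)}, and once present, the $\Lbox$-versions of \iref{(R$_2$)}, \iref{(R$_3$)}, \iref{(R$_5$)} and, crucially, \iref{(R$_4$)} can all fire: the step $(\Lbox r,0)\Rightarrow_{\iref{(R$_4$)}}(\Lbox r,1)$ has an output that does not contain $\Rbox$, so your observation that \iref{(R$_4$)} ``cannot occur after position $s$'' disposes only of its $\Rbox$-version. (Also, the first step out of $(\Rbox q,n_s)$ need not be \iref{(R$_3$)}: an \iref{(R$_1$)} step to some $\lambda_{s+1}$ with $\Rbox q\to^*\lambda_{s+1}$ keeps the time coordinate at $n_s$.) Neither of your fallbacks closes this: ``re-derivation is still possible'' is precisely what has to be proved, and it is not true that only the $\Box$-free case is needed downstream --- in Lemmas~\ref{sound} and~\ref{complete} the lemma is applied to the suffix starting at the \emph{last} $\Rbox$-literal (resp.\ $\Lbox$-literal) of $\der$, and that suffix can perfectly well contain literals of the other $\Box$-type.

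The missing idea is how to time-shift a suffix containing \iref{(R$_4$)}, the one rule that is not invariant under translation (it is anchored at moment $0$). The paper first eliminates all \iref{(R$_4$)} applications from $\mathfrak s$: each must be of the form $(\Lbox r,0)\Rightarrow(\Lbox r,1)$, and it is absorbed by prepending one \iref{(R$_2$)} step $(\Rbox q,n_s)\Rightarrow(\Rbox q,n_s+1)$ and shifting the initial segment of $\mathfrak s$ up by $1$, so that this segment now arrives at $(\Lbox r,1)$ directly and the \iref{(R$_4$)} step is dropped. Once $\mathfrak s$ is \iref{(R$_4$)}-free, every remaining rule in it is shift-invariant and the uniform shift by $k=m-n_\ell$ via $k$ prepended \iref{(R$_2$)} steps --- the part you did carry out, including the correct remark that safety of earlier \iref{(R$_5$)} applications is inherited --- goes through. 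As written, your argument establishes the lemma only in the special case where the suffix contains no $\Lbox$-literal at all, which is not the case the subsequent lemmas rely on.
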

\begin{proof}
We first remove all applications of~\iref{(R$_4$)} in~$\mathfrak{s}$. Let $(\lambda_i,n_i)\!\Rightarrow_{\iref{(R$_4$)}}\! (\lambda_{i+1},n_{i+1})$ be the first one.
By definition, $n_i = 0$ and, since $\Rbox q$ is the last $\Rbox$ in $\der$, we have $n_{i+1} = 1$ and  $\lambda_i = \lambda_{i+1} = \Lbox r$, for some $r$. So we can begin $\mathfrak s$ with $(\Rbox q, n_s) \!\Rightarrow_{\iref{(R$_2$)}}\! (\Rbox q, n_s\!+\!1)  
\Rightarrow (\lambda_{s+1},n_{s+1}\! +\! 1) \Rightarrow \!\cdots\! \Rightarrow (\lambda_i,n_i\!+\!1) \Rightarrow (\lambda_{i+2},n_{i+2})$; see Fig.~\ref{fig:monotonicity} on the left-hand side. We repeatedly apply this operation to obtain a suffix $\mathfrak{s}$ of the form~\eqref{eq:suff} that does not use~\iref{(R$_4$)}.
We then replace $\mathfrak{s}$ in $\der$ with $(\Rbox q, n_s)
\Rightarrow_{\iref{(R$_2$)}} \dots \Rightarrow_{\iref{(R$_2$)}} (\Rbox q, n_s+k)  
\Rightarrow (\lambda_{s+1},n_{s+1}+ k) \Rightarrow \dots \Rightarrow  (\lambda_\ell,n_\ell+k)$, where $k = m - n_\ell$; see Fig.~\ref{fig:monotonicity} on the right-hand side.
\qed
\end{proof}

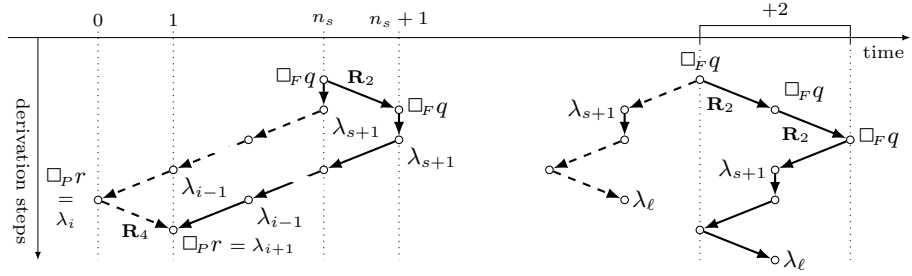
\begin{figure}[t]\centering
\begin{tikzpicture}[yscale=0.8,>=latex,point/.style={circle,draw=black,fill=white,minimum size=1mm,inner sep=0pt},label distance=-2pt]
\draw[ultra thin,->] (-3.8,0.7) to node [below,pos=0.6, sloped] {\scriptsize derivation steps} ++(0,-3.7);
\draw[ultra thin,->] (-4.2,0.7) to node [below,pos=0.97] {\scriptsize time} ++(12,0);
\draw[dotted] (-3,0.8) -- ++(0,-3.8);
\node at (-3,1) {\scriptsize 0};
\draw[dotted] (-2,0.8) -- ++(0,-3.8);
\node at (-2,1) {\scriptsize 1};
\draw[dotted] (0,0.8) -- ++(0,-3.8);
\node at (0,1) {\scriptsize $n_s$};
\draw[dotted] (1,0.8) -- ++(0,-3.8);
\node at (1,1) {\scriptsize $n_s+1$};
\node[point,label=left:{$\Rbox q$}] (q) at (0,0) {};
\node[point,label=below right:{\footnotesize$\lambda_{s+1}$}] (q1) at (0,-0.5) {};
\draw[thick,->,dashed] (q) to  (q1); 
\node[point] (q2) at (-1,-1) {};
\draw[thick,->,dashed] (q1) to (q2); 
\node[point,label=below right:{\footnotesize$\lambda_{i-1}$}] (q3) at (-2,-1.5) {};
\draw[thick,->,dashed] (q2) to  (q3); 
\node[point,label=left:{\parbox{2em}{\centering$\Lbox r$\\\scriptsize $=$\\$\lambda_i$}}] (q4) at (-3,-2) {};
\draw[thick,->,dashed] (q3) to  (q4); 
\node[point,label=below right:{$\Lbox r$\scriptsize\ $= \lambda_{i+1}$}] (q5) at (-2,-2.5) {};
\draw[thick,->,dashed] (q4) to node[below] {\scriptsize\iref{R$_4$}} (q5); %
\node[point,label=right:{$\Rbox q$}] (qp) at (1,-0.5) {};
\draw[thick,->] (q) to node[above] {\scriptsize\iref{R$_2$}} (qp); 
\node[point,label=below right:{\footnotesize$\lambda_{s+1}$}] (p1) at (1,-1) {};
\draw[thick,->] (qp) to  (p1); 
\node[point] (p2) at (0,-1.5) {};
\draw[thick,->] (p1) to (p2); 
\node[point,label=below right:{\footnotesize$\lambda_{i-1}$}] (p3) at (-1,-2) {};
\draw[thick,->] (p2) to  (p3); 
\draw[thick,->] (p3) to  (q5); 
\fill[white] (-1.5,-1.2) -- (-1.5,-1) -- (0,-1.7) -- (0,-1.9) -- cycle;
\draw[dotted] (5,0.6) -- ++(0,-3.6);
\draw[dotted] (7,0.6) -- ++(0,-3.6);
\draw[ultra thin] (5,0.6) to ++(0,0.3) to node[above,midway] {\scriptsize +2} ++(2,0) to ++(0,-0.3);
\node[point,label=above:{$\Rbox q$}] (s) at (5,0) {};
\node[point,label=left:{\footnotesize$\lambda_{s+1}$}] (s1) at (4,-0.5) {};
\draw[thick,->,dashed] (s) to  (s1); 
\node[point] (s2) at (4,-1) {};
\draw[thick,->,dashed] (s1) to (s2); 
\node[point] (s3) at (3,-1.5) {};
\draw[thick,->,dashed] (s2) to  (s3); 
\node[point,label=right:{$\lambda_\ell$}] (s4) at (4,-2) {};
\draw[thick,->,dashed] (s3) to  (s4); 
\node[point,label=above right:{$\Rbox q$}] (ts) at (6,-0.5) {};
\draw[thick,->] (s) to node[near start, below] {\scriptsize\iref{R$_2$}} (ts); 
\node[point,label=right:{$\Rbox q$}] (tss) at (7,-1) {};
\draw[thick,->] (ts) to node[near start, below] {\scriptsize\iref{R$_2$}} (tss); 
\node[point,label=left:{\footnotesize$\lambda_{s+1}$}] (t1) at (6,-1.5) {};
\draw[thick,->] (tss) to  (t1); 
\node[point] (t2) at (6,-2) {};
\draw[thick,->] (t1) to (t2); 
\node[point] (t3) at (5,-2.5) {};
\draw[thick,->] (t2) to  (t3); 
\node[point,label=right:{$\lambda_\ell$}] (t4) at (6,-3) {};
\draw[thick,->] (t3) to  (t4); 
\end{tikzpicture}
\caption{Removing applications of \iref{(R$_4$)} (left) and shifting a $0$-derivation by 2 (right): dashed arrows show the original derivation and solid ones   the resulting derivation.}\label{fig:monotonicity}
\end{figure}

However, 0-derivations are not enough to obtain all literals that are true in $\K$. Indeed, consider the formula
\begin{equation*}
\varphi \ \ = \ \ r \ \ \land \ \ \SVbox(r\to \Rbox q) \ \ \land \ \ \SVbox(\Rbox q \to q) \ \ \land \ \ \SVbox(\Lbox q \to p).
\end{equation*}
Clearly, $\K,n \models p$ for all $n \in \Z$, but neither $(p,n)$ nor $(\Lbox q,n)$ is 0-derivable. On the other hand, for each $n \in \Z$, there is a 0-derivation of $(q,n)$: for example,
\begin{equation*}
(r,0) \Rightarrow_{\iref{(R$_1$)}} (\Rbox q,0) \Rightarrow_{\iref{(R$_1$)}} (q,0) \Rightarrow_{\iref{(R$_5$)}} (\Rbox q, -1) \Rightarrow_{\iref{(R$_1$)}} (q, -1).
\end{equation*}
These 0-derivations correspond to $\K,0 \models \SVbox q$, from which we can derive $\SVbox p$ by means of the second type of derivations.  A sequence $\der\colon (\lambda_0,n_0) \Rightarrow \dots \Rightarrow  (\lambda_\ell,n_\ell)$ is called a \emph{$\forall$-derivation of $\lambda_\ell$ from $\lambda_0$} if it uses only~\iref{(R$_1$)}--\iref{(R$_3$)} and \iref{(R$_5$)}, whose applications are not necessarily safe.  
So we write $\Psi\Rightarrow^\forall  \lambda$ if  there is a $\forall$-derivation of $\lambda$ from some $q$ such that $\Psi\Rightarrow^0 (q,n)$, for all $n\in\Z$.
In the example above, $(q,0) \Rightarrow_{\iref{(R$_5$)}} (\Lbox q,1) \Rightarrow_{\iref{(R$_1$)}} (p,1)$ is a $\forall$-derivation of $p$ from $q$, whence $\Psi\Rightarrow^\forall p$.

\begin{lemma}[soundness]\label{sound} 
If $\Psi \Rightarrow^0 (\lambda,n)$  then $\K, n \models \lambda$. 
If $\Psi \Rightarrow^\forall \lambda$ then \mbox{$\K,0 \models \SVbox \lambda$}.
\end{lemma}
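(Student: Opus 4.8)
The plan is to prove both statements by induction on the length $\ell$ of the derivation $\der$, treating the two claims more or less simultaneously. For the first claim, I would show that every $0$-derivation $\der\colon (\lambda_0,n_0) \Rightarrow \dots \Rightarrow (\lambda_\ell,n_\ell)$ with $\lambda_0\in\Psi$ and $n_0=0$ satisfies $\K,n_\ell\models\lambda_\ell$. The base case $\ell=0$ is immediate since $\lambda_0\in\Psi$ and $\K$ is a model of $\Psi$. For the inductive step, I would do a case analysis on the rule applied to pass from $(\lambda_{\ell-1},n_{\ell-1})$ to $(\lambda_\ell,n_\ell)$. Rules \iref{(R$_1$)}--\iref{(R$_3$)} are handled by the observation already made in the text: whenever $(\lambda,n)\Rightarrow(\lambda',n')$ via one of these rules and $\K,n\models\lambda$, then $\K,n'\models\lambda'$ — this is just the semantics of $\to^*$ (using that $\K$ satisfies $\SVbox\Phi^+$) and of $\Rbox/\Lbox$. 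Rule \iref{(R$_4$)} is similar but uses additionally that $p'\to^* p$ for some $p'\in\Psi$ gives $\K,0\models p$, so $\K,0\models\Rbox p$ forces $\K,-1\models\Rbox p$ (and symmetrically for $\Lbox$).

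The genuinely interesting case is \iref{(R$_5$)}, say $(p,n_{\ell-1})\Rightarrow_{\iref{(R$_5$)}}(\Rbox p,n_{\ell-1}-1)$ with $n_\ell=n_{\ell-1}-1$. Here I cannot conclude $\K,n_\ell\models\Rbox p$ from $\K,n_{\ell-1}\models p$ alone; I need $p$ to hold at all moments $m\ge n_\ell$. This is where the safety condition and the monotonicity lemma (Lemma~\ref{lemma:monotone}) do the work. Since the application is safe, there is $j<\ell-1$ with $\lambda_j=\Rbox q$; take the \emph{last} such $\Rbox$-literal before position $\ell-1$, giving a suffix of $\der$ of the form~\eqref{eq:suff} ending at $(p,n_{\ell-1})$ in which no $\lambda_i$ after $\Rbox q$ contains $\Rbox$. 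By Lemma~\ref{lemma:monotone}, $\Psi\Rightarrow^0(p,m)$ for all $m\ge n_{\ell-1}$, and each such $0$-derivation has length smaller than $\ell$ (it is essentially the prefix of $\der$ up to position $\ell-1$, possibly with some \iref{(R$_2$)} steps inserted — here one must check the length bookkeeping, or better, phrase the induction on a well-founded measure so that the monotonicity-shifted derivations are strictly smaller; alternatively one proves the statement ``$\Psi\Rightarrow^0(\lambda,n)\Rightarrow\K,n\models\lambda$'' without reference to derivation length, by noting that Lemma~\ref{lemma:monotone}'s output derivations use only rules already in $\der$ plus \iref{(R$_2$)}, so an induction on the multiset of rule-occurrences works). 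Applying the inductive hypothesis to these shorter derivations yields $\K,m\models p$ for all $m\ge n_{\ell-1}>n_\ell$, and $\K,n_{\ell-1}\models p$ holds by the inductive hypothesis applied to the prefix ending at position $\ell-1$; hence $\K,m\models p$ for all $m\ge n_\ell$, i.e.\ $\K,n_\ell\models\Rbox p$. The case of $\Lbox$ is symmetric.

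For the second claim, suppose $\Psi\Rightarrow^\forall\lambda$, witnessed by a $\forall$-derivation $(\lambda_0,n_0)\Rightarrow\dots\Rightarrow(\lambda_\ell,n_\ell)=(\lambda,n_\ell)$ from $\lambda_0=q$ where $\Psi\Rightarrow^0(q,n)$ for all $n\in\Z$. By the first (already proved) part of the lemma, $\K,n\models q$ for all $n\in\Z$, i.e.\ $\K,0\models\SVbox q$. I would then prove by induction on $\ell$ that for every $m\in\Z$, $\K,m\models\lambda_0$ for all $m$ implies $\K,m\models\lambda_\ell$ for all $m$ — more precisely that for each $i$, if $\K$ makes $\lambda_0$ true everywhere then it makes $\lambda_i$ true everywhere, tracking the appropriate shift so that the semantics of the rules goes through uniformly across all of $\Z$. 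Rules \iref{(R$_1$)}--\iref{(R$_3$)} again propagate truth pointwise by the same reasoning as before, so if $\lambda_i$ holds everywhere so does $\lambda_{i+1}$. Rule \iref{(R$_5$)}, say $(p,n_i)\Rightarrow(\Rbox p,n_i-1)$, is now unproblematic \emph{precisely because} we are reasoning about a literal $p$ that (by the inductive hypothesis) holds at \emph{every} moment: $\K,m\models p$ for all $m$ trivially gives $\K,m\models\Rbox p$ for all $m$. Thus $\K,m\models\lambda_\ell$ for all $m$, i.e.\ $\K,0\models\SVbox\lambda$.

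The main obstacle is the \iref{(R$_5$)} case of the first claim: making the appeal to Lemma~\ref{lemma:monotone} rigorous requires care about the induction measure, since the monotonicity-shifted derivations are not literally sub-derivations of $\der$. The clean way is to state the induction hypothesis not on derivation length but on the multiset of rule applications (or simply to invoke the ``moreover'' clause of Lemma~\ref{lemma:monotone}, which guarantees the shifted derivations introduce only \iref{(R$_2$)} steps on top of rules already present, keeping the measure under control). Everything else is a routine unwinding of the \PTL{} semantics together with the defining property of $\K$ as the canonical (intersection) model of $\Psi\land\SVbox\Phi^+$.
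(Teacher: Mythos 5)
Your proof follows essentially the same route as the paper's: induction on the $0$-derivation with a case split on the last rule, invoking Lemma~\ref{lemma:monotone} to justify the safe applications of \iref{(R$_5$)}, and then propagating ``true at every moment'' along the $\forall$-derivation (which the paper leaves to the reader). Your extra care about the induction measure in the \iref{(R$_5$)} case --- noting that the shifted derivations produced by Lemma~\ref{lemma:monotone} are not literal sub-derivations and proposing to induct on the non-\iref{(R$_2$)} rule occurrences instead of raw length --- addresses a point the paper's own proof glosses over, and is exactly the right fix.
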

\begin{proof}
The proof for $\Rightarrow^0$ is by induction on the proof length. The basis of induction, $(\lambda,0)$ for $\lambda\in\Psi$, is by definition. Let $\der$ be a 0-derivation of $(\lambda,n)$. If the last rule application  is one of~\iref{(R$_1$)}--\iref{(R$_4$)} then $\K,n\models \lambda$ by the induction hypothesis. If $\der$ ends with $(p,n+1)\Rightarrow_{\iref{(R$_5$)}}(\Rbox p, n)$ then $\der$ without the last rule application contains $\Rbox$ and, by Lemma~\ref{lemma:monotone}, we obtain, for each $m \geq n+1$, a 0-derivation of $(p,m)$, whence $\K,n \models \Rbox p$.

The proof for $\Rightarrow^\forall$ is easy and left to the reader.\qed
\end{proof}

\begin{lemma}[completeness]\label{complete}
If $\K,n\models \lambda$ then either
$\Psi\Rightarrow^0 (\lambda,n)$
or $\Psi \Rightarrow^\forall \lambda$.
\end{lemma}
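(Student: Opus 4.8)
The plan is to prove the contrapositive-style statement by a careful analysis of the canonical model $\K$. Since $\K$ is the canonical (minimal) model of the positive Horn part $\Psi \land \SVbox \Phi^+$, the set of literals true at a moment $n$ is exactly the closure of the initial conditions under the clauses of $\Phi^+$ together with the semantic constraints on $\Rbox/\Lbox$. So I would first establish a syntactic characterisation: $\K, n \models \lambda$ iff $\lambda$ belongs to the least fixed point of an operator that (i) propagates $\to^*$-consequences within each moment, (ii) pushes $\Rbox p$ from $n$ to $n+1$ and extracts $p$ at $n+1$ (and symmetrically for $\Lbox$), and (iii) — the nontrivial clause — infers $\Rbox p$ at $n$ whenever $p$ holds at every $m \ge n$ (symmetrically for $\Lbox$). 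Rules \iref{(R$_1$)}--\iref{(R$_3$)} mirror (i)--(ii), rule \iref{(R$_5$)} mirrors (iii), and \iref{(R$_4$)} handles the special behaviour at moment $0$ where the initial conditions $\Psi$ sit. The task is then to show the derivation system captures this fixed point.

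I would argue by induction on the stage at which a literal enters the fixed point (equivalently, on a well-founded rank of the ``reason'' $\K, n \models \lambda$). The base case is $\lambda \in \Psi$ at $n = 0$, which is a $0$-derivation of length $0$. For the inductive step, suppose $\K, n \models \lambda$ because of a clause application or a semantic constraint of lower rank. The cases corresponding to \iref{(R$_1$)}--\iref{(R$_4$)} are routine: combine the derivation(s) obtained from the induction hypothesis for the premises and append the matching rule; for \iref{(R$_4$)} one uses that the premise $p$ at moment $0$ must itself come from $\Psi$ via $\to^*$, matching the side condition of \iref{(R$_4$)}. The delicate case is when $\K, n \models \Rbox p$ holds ``for the first time'' because $\K, m \models p$ for all $m \ge n$ but $\Rbox p$ was not forced at $n$ by an earlier $\Rbox p$ at $n-1$ via \iref{(R$_2$)}. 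Here one splits according to whether every $(p,m)$, $m \ge n$, is $0$-derivable or some of them requires a $\forall$-derivation: in the former case, pick the $0$-derivation of $(p,n)$; by the induction hypothesis its penultimate steps must have introduced some $\Rbox q$ (otherwise $p$ would be forced only at finitely many moments, using Lemma~\ref{lemma:monotone} in the reverse direction), which makes the application $(p,n) \Rightarrow_{\iref{(R$_5$)}} (\Rbox p, n-1)$, and then \iref{(R$_4$)} or \iref{(R$_2$)}, safe; in the latter case the $\forall$-machinery takes over and we land in the $\Psi \Rightarrow^\forall$ alternative.

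The main obstacle I anticipate is precisely this interplay between the two derivation notions and the safety condition on \iref{(R$_5$)}: one must show that whenever $\K$ makes some $p$ true on an entire forward (or backward) ray, the ``witness'' for that fact in $\K$ is always an $\Rbox$-literal (resp.\ $\Lbox$-literal) that is itself derivable earlier in the same derivation, so that the \iref{(R$_5$)} application is legitimate — and, when the ray is two-sided (true everywhere), that one can peel off a $0$-derivation establishing $\SVbox q$ for some $q$ and then reach $\lambda$ by a $\forall$-derivation. Here Lemma~\ref{lemma:monotone} is the crucial tool: it guarantees that a $0$-derivation ending below an $\Rbox q$ can be shifted arbitrarily far to the right, which both justifies the safe \iref{(R$_5$)} steps and lets us conclude $\Psi \Rightarrow^0 (q,n)$ for all $n$ when needed. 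I expect the proof to proceed by a secondary induction or a minimal-counterexample argument on the position $n$ and on derivation length to make the ``first time $\Rbox p$ becomes true'' reasoning precise, and to require a short separate lemma stating that if $\K,m \models p$ for all $m$ in some infinite ray then this is witnessed by an $\Rbox$- or $\Lbox$-literal true somewhere on that ray.
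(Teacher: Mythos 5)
Your outer scaffolding differs from the paper's in a way that matters. You propose to characterise $\K$ as a least fixed point (with an infinitary rule inferring $\Rbox p$ at $n$ from $p$ at all $m\ge n$) and then induct on the stage at which a literal enters it; you yourself flag that making the ``first time $\Rbox p$ becomes true'' reasoning precise needs a further secondary induction or minimal-counterexample argument, and that part is left open. The paper avoids this rank bookkeeping entirely: it defines an interpretation $\M$ by letting $\M,n\models p$ iff $\Psi\Rightarrow^0(p,n)$ or $\Psi\Rightarrow^\forall p$, verifies clause by clause that $\M,0\models\Psi\land\SVbox\Phi^+$, and then concludes by \emph{minimality} of the canonical model that everything true in $\K$ is already derivable. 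This turns the completeness claim into a closure property of the set of derivable literals, so no well-founded rank for the infinitary $\Rbox$-rule is ever needed. Your approach could probably be pushed through, but as written the hardest step is deferred rather than carried out.

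There is also a concrete error in your treatment of the delicate case. When $\M,n\models\Rbox p$ and you pick a $0$-derivation of $(p,n+1)$, you claim its ``penultimate steps must have introduced some $\Rbox q$, otherwise $p$ would be forced only at finitely many moments.'' This is false: the derivation may contain only $\Lbox$-literals (Lemma~\ref{lemma:monotone} then shifts it \emph{left}, giving $\Psi\Rightarrow^0(p,m)$ for all $m\le n+1$, which together with derivability of $(p,m)$ for $m>n$ yields $\Psi\Rightarrow^\forall$ rather than a safe \iref{(R$_5$)} step), or it may contain no $\Box$-literal at all, which forces $n+1=0$ and requires the special argument through \iref{(R$_4$)} using a derivation of $(p,1)$ (the paper's case \textbf{(0)}). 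Relatedly, your dichotomy ``every $(p,m)$ is $0$-derivable versus some $m$ requires a $\forall$-derivation'' does not line up with the dichotomy of conclusions $\Psi\Rightarrow^0(\Rbox p,n)$ versus $\Psi\Rightarrow^\forall \Rbox p$: in the $\Lbox$-only situation every instance \emph{is} $0$-derivable and yet the correct conclusion is the $\forall$-alternative. The three-way split on whether the witnessing derivation contains $\Rbox$, $\Lbox$, or neither is the essential case analysis, and your sketch collapses it into two cases that do not cover it.
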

\begin{proof}
Let $\M$ be an interpretation such that, for all $p$ and $n\in\Z$, we have $\M,n\models p$ iff $\Psi\Rightarrow^0 (p,n)$ or $\Psi\Rightarrow^\forall p$. It suffices to show that $\M,0\models\Psi\land\SVbox \Phi^+$. Indeed, if we assume that there are $p'$ and $n'$ such that $\K,n'\models p'$ but neither $\Psi\Rightarrow^0(p',n')$ nor $\Psi\Rightarrow^\forall p'$, we will obtain $\M,n'\models\neg p'$ contrary to our assumption (other types of literals are considered analogously). 

Thus, we have to show that $\M$ is a model of $\Psi\land\SVbox \Phi^+$.
Suppose $p \in \Psi$. Then trivially $\Psi\Rightarrow^0 (p,0)$, and so $\M, 0 \models p$. Suppose $\lambda_1 \to \lambda_2 \in \Phi^+$ and $\M, n \models \lambda_1$. We  
consider three cases depending on the shape of $\lambda_1$ and show that $\M,n\models \lambda_2$.
\begin{description}
\item[\normalfont$\lambda_1 = p$.] If~$\Psi\Rightarrow^\forall p$ then, by~\iref{(R$_1$)}, $\Psi\Rightarrow^\forall\lambda_2$. Otherwise, there is a 0-derivation of $(p,n)$, and so $\Psi\Rightarrow^0 (\lambda_1,n) \Rightarrow_{\iref{(R$_1$)}}(\lambda_2,n)$. 

\item[\normalfont$\lambda_1= \Rbox p$.] Then $\M, m \models p$ for all $m > n$. Consider  $\M, n+1 \models p$. If $\Psi\Rightarrow^\forall p$ then, by~\iref{(R$_5$)},~\iref{(R$_1$)}, $\Psi\Rightarrow^\forall\lambda_2$.  Otherwise, there is a 0-derivation $\der$ of $(p,n+1)$.
\begin{description}
\item[\normalfont(F)] If $\Rbox$ occurs in $\der$ then $\Psi\Rightarrow^0 (p,n+1) \Rightarrow_{\iref{(R$_5$)}} (\Rbox p, n)\Rightarrow_{\iref{(R$_1$)}} (\lambda_2, n)$. 
\item[\normalfont(P)] If $\Lbox$ occurs in $\der$  then, by Lemma~\ref{lemma:monotone}, $\Psi \Rightarrow^0 (p,m)$ for each $m \leq n + 1$. Thus, $\Psi\Rightarrow^0 (p, m)$ for \emph{all} $m
  \in \Z$, and so, by~\iref{(R$_5$)} and~\iref{(R$_1$)}, $\Psi\Rightarrow^\forall \lambda_2$. 

\item[\normalfont(0)] If $\der$ contains neither $\Rbox$ nor $\Lbox$ then $n = -1$ and $\lambda \to^* p$, for some $\lambda\in\Psi$ (by~\iref{(R$_1$)}). 
As $\M, 1 \models p$ and we assumed $\Psi\not\Rightarrow^\forall p$, there is a 0-derivation $\der'$ of $(p,1)$, which  must contain $\Rbox$ or $\Lbox$. If $\der'$ contains $\Rbox$ then $\Psi\Rightarrow^0 (p,1) \Rightarrow_{\iref{(R$_5$)}} (\Rbox p, 0)\Rightarrow_{\iref{(R$_4$)}} (\Rbox p, -1)\Rightarrow_{\iref{(R$_1$)}} (\lambda_2, n)$.  If $\Lbox$ occurs in $\der'$  then, by the argument in~(P), $\Psi\Rightarrow^\forall\lambda_2$.
\end{description}

\item[\normalfont$\lambda_1= \Lbox p$.] The proof is symmetric. 
\end{description}
In each of these cases, we have either $\Psi\Rightarrow^0 (\lambda_2,n)$ or $\Psi\Rightarrow^\forall \lambda_2$. Observe that $\Psi \Rightarrow^0 (\lambda_2,n)$ implies $\M,n\models\lambda_2$. Indeed, this clearly holds for $\lambda_2 = p$. If $\lambda_2 = \Rbox p$ then, by repetitive applications of~\iref{(R$_2$)} and an application of~\iref{(R$_3$)}, we obtain $\Psi\Rightarrow^0 (p,m)$, for all $m > n$, which means $\M,n\models\Rbox p$. The case $\lambda_2 = \Lbox p$ is symmetric. 
If $\Psi\Rightarrow^\forall \lambda_2$  then, independently of whether $\lambda_2$ is $p'$, $\Rbox p'$ or $\Lbox p'$, we have $\Psi\Rightarrow^\forall p'$, so $\M, m \models p'$ for all $m \in \Z$, whence, $\M,n \models \lambda_2$.\qed
\end{proof}

Next, in Lemmas~\ref{lem:short-chain} and~\ref{lem:univ-chain}, we provide efficient criteria for checking the conditions $\Psi\Rightarrow^0 (\lambda,n)$ and $\Psi \Rightarrow^\forall \lambda$ by restricting the range of numbers that can be used in 0-derivations (numbers in $\forall$-derivations can simply be ignored). Given a 0-derivation $\der\colon (\lambda_0,n_0) \Rightarrow \dots \Rightarrow  (\lambda_\ell,n_\ell)$, we define its \emph{reach} as
\begin{equation*}
r(\der)  = \max \{|n_i| \mid 0 \leq i \leq \ell\}.
\end{equation*}
We say that $\der$ \emph{right-stutters}, if there are $v<w$ such that  
$\lambda_v=\lambda_w$, $n_v < n_w$ and $n_i > 0$, for all $i$, $v \leq i
\leq w$ (in particular,~\iref{(R$_4$)} is not applied between $v$ and $w$). Symmetrically, $\der$ \emph{left-stutters} if there are $v < w$ such that $\lambda_v=\lambda_w$, $n_v > n_w$ and $n_i <0$, for all $i$, $v \leq i \leq w$. 

\begin{lemma}[checking~$\Rightarrow^0$]\label{lem:short-chain}
$\Psi\Rightarrow^0 (\lambda, n)$ iff there exists a $0$-derivation $\der$
  of $(\lambda, m)$ such that $r(\der) \le 2|\varphi|$ and one of the
  following conditions holds\textup{:}
\begin{description}
\item[\iref{(C$_1$)}] $m=n$\textup{;}
\item[\iref{(C$_2$)}] $\der$ contains $\Rbox$ and  either $m \leq n$ or $\der$ left-stutters\textup{;}
\item[\iref{(C$_3$)}] $\der$ contains $\Lbox$ and either $m \geq n$ or $\der$ right-stutters.
\end{description}
\end{lemma}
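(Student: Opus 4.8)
The direction ($\Leftarrow$) is immediate: any 0-derivation $\der$ of $(\lambda,m)$ satisfying one of \iref{(C$_1$)}--\iref{(C$_3$)} can be extended to a 0-derivation of $(\lambda,n)$. For \iref{(C$_1$)} there is nothing to do; for \iref{(C$_2$)}, if $\der$ contains $\Rbox$, then by the monotonicity lemma (Lemma~\ref{lemma:monotone}) we get $\Psi \Rightarrow^0 (\lambda,m')$ for all $m' \ge m$, and if instead $\der$ left-stutters then by pasting the stuttering segment repeatedly we can also reach arbitrarily small $m'$, so in either case we reach $n$; \iref{(C$_3$)} is symmetric. Hence $\Psi \Rightarrow^0 (\lambda,n)$.

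For the ($\Rightarrow$) direction, I would start from an arbitrary 0-derivation $\der$ of $(\lambda,n)$ and massage it into one witnessing the right-hand side. The plan has two independent parts: (i) show one may assume the \emph{reach} is bounded by $2|\varphi|$, and (ii) show that the final pair can be taken to satisfy one of \iref{(C$_1$)}--\iref{(C$_3$)}. For (i), the idea is a pumping/excision argument on the time coordinate: since the set of positive temporal literals of $\varphi$ has size $\le 2|\varphi|$, if the reach exceeds $2|\varphi|$ then along the (monotone) excursion of $\der$ into large positive (or negative) times, some literal $\lambda_v = \lambda_w$ repeats with $n_v < n_w$ (resp. $n_v > n_w$) while all intermediate times stay positive (resp. negative). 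One then either deletes the loop between $v$ and $w$ to pull the reach down — using \iref{(R$_2$)} to re-synchronise the shifted suffix, much as in the proof of Lemma~\ref{lemma:monotone} — or, if the loop is genuinely needed to reach the target time, one observes that $\der$ already right-stutters (resp. left-stutters), which is exactly condition \iref{(C$_2$)}/\iref{(C$_3$)}, and truncate $\der$ at the stuttering literal so that the resulting shorter derivation of $(\lambda,m)$, $m$ now near 0, has small reach. The key invariant to maintain is that safety of every \iref{(R$_5$)} application is preserved under deletion of a suffix-loop and under the shifts, which is clear because the witnessing $\Rbox q$ or $\Lbox q$ occurs strictly before the \iref{(R$_5$)} step and those steps are not themselves moved past one another.

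For (ii), given a reach-bounded 0-derivation $\der$ of $(\lambda,n)$, I would look at whether $\der$ mentions any $\Box$-literal. If $\der$ contains $\Rbox$ somewhere (in particular if the last \iref{(R$_5$)} step that produced a $\Box$-prefix fired), then by Lemma~\ref{lemma:monotone} it extends to $(\lambda,m)$ for all $m \ge n_\ell$; truncating or extending appropriately we may land on some $m \le n$ and be in \iref{(C$_2$)}. Symmetrically for $\Lbox$ and \iref{(C$_3$)}. If $\der$ contains no $\Box$-literal at all then it uses only \iref{(R$_1$)} (no safe \iref{(R$_5$)} application is possible, \iref{(R$_4$)} requires a $\Box$-literal, and \iref{(R$_2$)},\iref{(R$_3$)} need a $\Box$), hence $n = n_0 = 0$ and $m = n$, i.e. \iref{(C$_1$)}. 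The subtle point is the interaction with left/right-stuttering: when the derivation contains a $\Box$-literal but the target time $n$ lies on the "wrong side" of the reached time, one must invoke the stuttering alternative, so the argument in (i) and (ii) have to be run together to guarantee that the same bounded-reach derivation simultaneously satisfies the reach bound and one of the three cases.

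The main obstacle I expect is the bookkeeping in part (i): ensuring that when we excise a time-loop and apply \iref{(R$_2$)}-padding to realign the suffix, we do not accidentally push a time coordinate across $0$ (which would break the $n_i>0$ condition needed for a genuine right-stutter, or would create an illegal \iref{(R$_4$)} context), and that all safety conditions survive. This is the same flavour of surgery as in Lemma~\ref{lemma:monotone} but must be applied simultaneously on both the positive and negative sides and combined with the case analysis of (ii); getting the quantitative bound exactly $2|\varphi|$ (rather than some larger polynomial) will require being careful that the excision is performed with respect to the repetition of \emph{literals}, of which there are at most $2|\varphi|$.
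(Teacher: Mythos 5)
Your overall plan coincides with the paper's proof of the ($\Rightarrow$) direction: locate the first $\Box$-literal (which sits at a time of absolute value at most $1$), find a repeated literal on the monotone excursion past time $\pm|\varphi|$, excise the loops that can be recovered and keep the others as stuttering witnesses, performing the same \iref{(R$_4$)}-elimination and \iref{(R$_2$)}-realignment surgery as in Lemma~\ref{lemma:monotone}; the ($\Leftarrow$) direction is left to the reader in the paper, and your sketch of it is adequate. However, you have the crucial pairing reversed, and it is the crux of the case split. When the derivation contains $\Rbox q$, it is the \emph{right}-stutter ($n_v<n_w$, intermediate times positive) that is deletable: cutting it out and shifting the tail down by $n_w-n_v$ only moves the endpoint to the left, and the monotonicity lemma recovers any rightward shift for free. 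The \emph{left}-stutter is the one that must be \emph{kept}, because it is the only mechanism for pushing the endpoint arbitrarily far to the left; it is therefore the left-stutter that witnesses \iref{(C$_2$)} (and symmetrically the right-stutter that witnesses \iref{(C$_3$)}), exactly as the lemma is stated, and not ``right-stutters (resp.\ left-stutters) \dots\ \iref{(C$_2$)}/\iref{(C$_3$)}'' as in your sketch---taken literally, that assignment would leave you with a derivation that does not satisfy any of the three conditions. Once the pairing is corrected, your informal criterion ``the loop is genuinely needed to reach the target time'' becomes precise: a loop must be retained exactly when it moves time in the direction that the $\Box$-operator occurring in the derivation cannot supply.
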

\begin{proof}
$(\Rightarrow)$ Let $\der\colon (\lambda_0,n_0) \Rightarrow \dots \Rightarrow (\lambda_\ell,n_\ell)$ be a 0-derivation of $(\lambda,n)$. If \mbox{$r(\der) \leq |\varphi|$} then $\der$ satisfies \iref{(C$_1$)}.
Otherwise, we take the first $\Box$-literal in $\der$, say $\lambda_t = \Rbox q$ (the case of $\Lbox q$ is symmetric). Clearly, $|n_t| \leq 1$. Let $u > t$ be the smallest index with  $|n_u| > |\varphi|$. Since adjacent $n_i$ and $n_{i+1}$ differ by at most 1, the segment between $(\lambda_t,n_t)$ and $(\lambda_u,n_u)$ contains a repeating literal: more precisely, there exist $v<w$ between $t$ and $u$ such that $\lambda_v=\lambda_w$ and 
\begin{itemize}
\item either $n_v > n_w \text{ and } n_i < 0 ,\text{ for } v \leq i \leq w$, 

\item or $n_v < n_w \text{ and } n_i > 0, \text{ for } v \leq i \leq
    w$.
\end{itemize}
In the former case $\der$ left-stutters, and we perform the following operations on the suffix $\mathfrak{s}\colon (\lambda_w, n_w) \Rightarrow \dots \Rightarrow (\lambda_\ell, n_\ell)$ of $\der$. First, we eliminate all applications of~\iref{(R$_4$)} in $\mathfrak{s}$: each suffix $(\Rbox q,0)\Rightarrow_{\iref{(R$_4$)}} (\Rbox q,-1) \Rightarrow (\lambda_s,n_s) \Rightarrow \dots \Rightarrow (\lambda_\ell,n_\ell)$ is replaced  by  $(\Rbox q,0) \Rightarrow (\lambda_s,n_s+1) \Rightarrow \dots \Rightarrow (\lambda_\ell,n_\ell+1)$; and 
similarly for $\Lbox$. If each time we eliminate the last application of~\iref{(R$_4$)} then the result is clearly a 0-derivation. Second, we remove all duplicating literals: each suffix $(\lambda_s,n_s) \Rightarrow \dots \Rightarrow (\lambda_{s'},n_{s'}) \Rightarrow (\lambda_{s'+1},n_{s'+1}) \Rightarrow  \dots \Rightarrow (\lambda_\ell,n_\ell)$ with $\lambda_s = \lambda_{s'}$ is replaced by $(\lambda_s,n_s) \Rightarrow  (\lambda_{s'+1},n_{s'+1} + k) \Rightarrow \dots \Rightarrow (\lambda_\ell,n_\ell + k)$, where $k = n_s - n_{s'}$.
%
This will give us a left-stuttering
0-derivation $\der'$ of $(\lambda, m)$, for some $m$. Since there are at most $|\varphi|$ distinct literals in $\mathfrak{s}$, we have $r(\der') \le
2|\varphi|$, thus satisfying the second option
of~\iref{(C$_2$)}; see Fig.~\ref{fig:stutter}.

\begin{figure}[t]\centering
\begin{tikzpicture}[yscale=0.75,>=latex,point/.style={circle,draw=black,fill=white,minimum size=1mm,inner sep=0pt},label distance=-2pt]
\draw[ultra thin,->] (-3.8,0.7) to node [below,pos=0.6, sloped] {\scriptsize derivation steps} ++(0,-4);
\draw[ultra thin,->] (-4.2,0.7) to node [below,pos=0.97] {\scriptsize time} ++(12,0);
\fill[gray!10] (0,0.7) rectangle (4,-3); 
\draw[dotted] (-3,0.8) -- ++(0,-3.8);
\node at (-3,1) {\scriptsize $m$};
\draw[dotted] (0,0.8) -- ++(0,-3.8);
\node at (0,1) {\scriptsize $-|\varphi|-1$};
\draw[dotted] (4,0.8) -- ++(0,-3.8);
\node at (4,1) {\scriptsize $-1$};
\draw[dotted] (5,0.8) -- ++(0,-3.8);
\node at (5,1) {\scriptsize $0$};
\draw[dotted] (6,0.8) -- ++(0,-4.3);
\node at (6,1) {\scriptsize $n_t$\tiny $({}=1)$};
\draw[dotted] (3,0.8) -- ++(0,-3.8);
\node at (3,1) {\footnotesize $n_v$};
\draw[dotted] (1,0.8) -- ++(0,-3.8);
\node at (1,1) {\footnotesize $n_w$};
\node[point,label=left:{$\lambda_0$}] (l0) at (5,0.2) {};
\node[point,label=right:{$\lambda_t$}] (lt) at (6,0) {};
\draw[thick,->,dashed] (l0) -- (lt);
\node[point,label=above:{\large $\lambda_v$}] (lv) at (3,-1) {};
\draw[thick,->,dashed] (lt) -- ++(-2,-0.4) -- ++(1,-0.2) -- (lv);
\node[point,label=below:{\large\hspace*{1em} $\lambda_w$\footnotesize ${}=\lambda_v$}] (lw) at (1,-2) {};
\draw[very thick,->,densely dotted] (lv) -- ++(-1,-0.2) -- ++(2,-0.4) -- (lw);
\node[point,label=left:{$\lambda$}] (l) at (-3,-3) {};
\draw[thick,->,dashed] (lw) -- ++(-2,-0.4) -- ++(1,-0.2) -- (l);
\end{tikzpicture}
\caption{Left-stuttering: $n_v$ and $n_w$ occur between $-1$ and $-|\varphi|-1$ (shaded) and the fragment of the derivation from $n_v$ to $n_w$ can be repeated any number of times (incl.\ $0$).}\label{fig:stutter}
\end{figure}
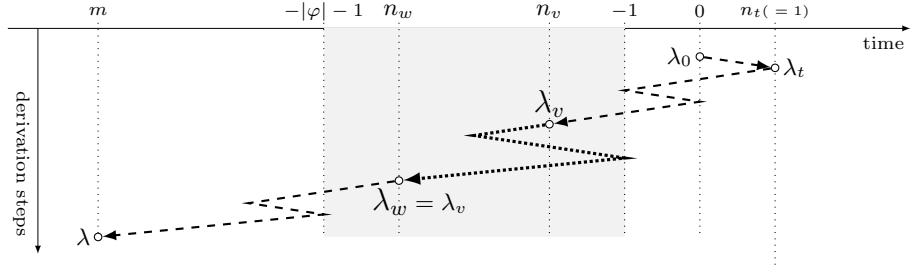

In the latter case $\der$ right-stutters, and we construct a
0-derivation $\der'$ of $(p, n')$ by cutting out the segment
$(\lambda_{v+1}, n_{v+1}) \Rightarrow \dots \Rightarrow (\lambda_w,
n_w)$ from $\der$ and `shifting' the tail using the construction above: eliminate applications of~\iref{(R$_4$)} and then decrease all numbers by $n_w - n_v > 0$. We then consider the obtained $\der'$ as the original
$\der$.
As the length of the derivations decreases and $n' \le n$, by applying this procedure 
sufficiently many times, we shall finally construct a 0-derivation of reach $\leq 2|\varphi|$ and satisfying either~\iref{(C$_1$)} or the first option of~\iref{(C$_2$)}.

\smallskip

$(\Leftarrow)$ is left to the reader. 
\qed
\end{proof}

In a similar way we can show how to efficiently check the condition $\Psi \Rightarrow^\forall p$:
\begin{lemma}[checking~$\Rightarrow^\forall$]\label{lem:univ-chain}
 $\Psi\Rightarrow^0 (\lambda, n)$ holds for all $n \in \Z$ iff there are
  $0$-derivations $\der$ of $(\lambda,m)$ and $\der'$ of $(\lambda, m')$ of reach
  at most $2|\varphi|$ such that one of the following conditions
  holds\textup{:}
\begin{description}
  \item[\iref{(C$_1'$)}] $\der$ contains $\Rbox$, $\der'$ contains $\Lbox$ and $m \leq m'+1$\textup{;}
  \item[\iref{(C$_2'$)}] $\der$ contains $\Rbox$ and left-stutters\textup{;}
  \item[\iref{(C$_3'$)}] $\der$ contains $\Lbox$ and right-stutters.
  \end{description}
\end{lemma}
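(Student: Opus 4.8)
\textbf{Proof plan for Lemma~\ref{lem:univ-chain}.} The statement characterises when $\Psi\Rightarrow^0(\lambda,n)$ holds \emph{for all} $n\in\Z$ in terms of the existence of short 0-derivations. The plan is to prove both directions by reusing the machinery of Lemma~\ref{lem:short-chain} and, crucially, the monotonicity Lemma~\ref{lemma:monotone}. For the $(\Leftarrow)$ direction, suppose we have short 0-derivations witnessing one of \iref{(C$_1'$)}--\iref{(C$_3'$)}. In case \iref{(C$_2'$)}, $\der$ contains $\Rbox$ and left-stutters, so by pumping the left-stuttering segment downwards we obtain, for every $m'' \le m$, a 0-derivation of $(\lambda,m'')$; since $\der$ also contains $\Rbox$, Lemma~\ref{lemma:monotone} gives 0-derivations of $(\lambda,m'')$ for all $m'' \ge m$ as well, hence for all $n\in\Z$. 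Case \iref{(C$_3'$)} is symmetric (pump the right-stuttering segment upwards, then apply the $\Lbox$-half of Lemma~\ref{lemma:monotone}). In case \iref{(C$_1'$)}: $\der$ contains $\Rbox$, so by Lemma~\ref{lemma:monotone} we get $\Psi\Rightarrow^0(\lambda,m'')$ for all $m'' \ge m$; $\der'$ contains $\Lbox$, so by the symmetric half of Lemma~\ref{lemma:monotone} we get $\Psi\Rightarrow^0(\lambda,m'')$ for all $m'' \le m'$; since $m \le m'+1$, the two half-lines $[m,\infty)$ and $(-\infty,m']$ cover all of $\Z$ (they overlap or are adjacent), so $\Psi\Rightarrow^0(\lambda,n)$ for every $n$.

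For the $(\Rightarrow)$ direction, assume $\Psi\Rightarrow^0(\lambda,n)$ for all $n\in\Z$. In particular there are 0-derivations witnessing $(\lambda,n)$ for arbitrarily large and arbitrarily small $n$. I would first argue that at least one such derivation must contain a $\Box$-literal: a 0-derivation using only \iref{(R$_1$)},\iref{(R$_3$)},\iref{(R$_4$)},\iref{(R$_5$)} without ever holding a $\Box$-literal can only reach a bounded range of moments, so to reach large $|n|$ some $\Box$ must appear (and once one appears, the reach can grow, but the \emph{sign} of the moment where the first $\Box$ appears, together with whether it is $\Rbox$ or $\Lbox$, controls which direction that derivation naturally extends). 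Concretely, pick $n$ very negative with a 0-derivation $\der_-$ of $(\lambda,n)$; apply the normalisation from the $(\Rightarrow)$ half of Lemma~\ref{lem:short-chain} to $\der_-$ to shrink its reach to at most $2|\varphi|$. If this normalisation produces a left-stuttering derivation containing $\Rbox$ we are in \iref{(C$_2'$)}; if it produces a $\Lbox$-derivation (right-stuttering or with $m \ge n$) this alone does not yet give \iref{(C$_3'$)}, so I would symmetrically pick $n$ very positive with a 0-derivation $\der_+$, normalise it, and then do a case analysis on the pair (type of $\Box$ and stuttering status of $\der_-$, same for $\der_+$): the non-stuttering cases must pair a $\Box$-derivation of small reach reaching some $m$ with one reaching some $m'$ on the opposite side, and short-derivation reach bounds plus the assumption of reachability at \emph{every} intermediate $n$ force $m \le m'+1$, giving \iref{(C$_1'$)}.

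The main obstacle I expect is the $(\Rightarrow)$ direction's bookkeeping: showing that from the (a priori unbounded, possibly highly non-uniform) family of 0-derivations for all $n$, one can extract \emph{two} short derivations of the special shapes listed, rather than needing infinitely many or needing derivations of unbounded reach. The key leverage is that Lemma~\ref{lemma:monotone} already lets a single $\Box$-containing derivation generate a whole half-line of moments `for free', so one only needs to cover $\Z$ by at most two half-lines; and Lemma~\ref{lem:short-chain}'s normalisation already compresses any individual derivation. The delicate point is ruling out the scenario where the only derivations reaching, say, all large positive $n$ happen to contain only $\Lbox$ (which Lemma~\ref{lemma:monotone} extends in the wrong direction) — here one must observe that such a $\Lbox$-derivation reaching an arbitrarily large positive moment $m$ must right-stutter (its reach exceeds $2|\varphi|$, so a literal repeats while staying in the positive region), landing us in \iref{(C$_3'$)}. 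I would also note that the reach bound $2|\varphi|$ and the slack `$+1$' in \iref{(C$_1'$)} both come directly from the corresponding features of Lemma~\ref{lem:short-chain} and the adjacency-of-half-lines argument, so no new quantitative work is needed. The $(\Leftarrow)$ direction is routine given Lemma~\ref{lemma:monotone}, and as in Lemma~\ref{lem:short-chain} its full verification can be left to the reader.
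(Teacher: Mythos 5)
Your $(\Leftarrow)$ direction is essentially sound, and the paper leaves that direction to the reader anyway. One small imprecision: pumping a left-stuttering segment does not yield a $0$-derivation of $(\lambda,m'')$ for \emph{every} $m''\le m$, only for the arithmetic progression $m-k(n_v-n_w)$, $k\ge 0$; you recover all of $\Z$ only because each pumped derivation still contains $\Rbox$ and hence, by Lemma~\ref{lemma:monotone}, covers the whole half-line above its endpoint. Your case \iref{(C$_1'$)} argument (two half-lines $[m,\infty)$ and $(-\infty,m']$ meeting because $m\le m'+1$) is exactly right.

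The genuine gap is in $(\Rightarrow)$, at precisely the step you describe as being ``forced'': nothing in your argument establishes $m\le m'+1$. Applying Lemma~\ref{lem:short-chain} to a very positive $n$ gives, in the non-stuttering case, an $\Rbox$-derivation ending at some $m$ with $|m|\le 2|\varphi|$, and applying it to a very negative $n$ gives a $\Lbox$-derivation ending at some $m'$ with $|m'|\le 2|\varphi|$; but these endpoints may well satisfy $m=2|\varphi|$ and $m'=-2|\varphi|$, leaving the interval between them uncovered, and the two derivations you have extracted say nothing about it. The hypothesis that $\Psi\Rightarrow^0(\lambda,n)$ holds at every intermediate $n$ does close this gap, but only through an explicit descent, which is the entire content of the paper's proof: starting from an $\Rbox$-derivation $\der_i$ of $(\lambda,n_i)$, apply Lemma~\ref{lem:short-chain} to the instance $\Psi\Rightarrow^0(\lambda,n_i-1)$ of the hypothesis; the resulting short derivation either stutters (giving \iref{(C$_2'$)} or \iref{(C$_3'$)}), or contains $\Lbox$ and ends at some $n_{i+1}\ge n_i-1$ (so it pairs with $\der_i$ to give \iref{(C$_1'$)}, since $n_i\le n_{i+1}+1$), or contains $\Rbox$ and ends at some $n_{i+1}\le n_i-1$ (becoming $\der_{i+1}$, and the descent continues), or --- the edge case --- is $\Box$-free, which forces $n_{i+1}=0$ and $n_i=1$ and is repaired by appending an application of \iref{(R$_5$)}, \iref{(R$_4$)} and \iref{(R$_3$)} to $\der_i$. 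Termination is guaranteed because the strictly decreasing $n_i$ are bounded below by $-2|\varphi|$. Without this iteration, your two-endpoint extraction does not suffice, so the $(\Rightarrow)$ direction as proposed is incomplete.
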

\begin{proof}
$(\Rightarrow)$ Take a 0-derivation of $(q, 2|\varphi|+1)$. By
Lemma \ref{lem:short-chain}, there is a derivation $\der_0$ of
$(q,n_0)$ with $r(\der_0)\le 2|\varphi|$ satisfying either
\iref{(C$_2$)} or \iref{(C$_3$)}. If $\der_0$ left- or right-stutters then we have \iref{(C$_2'$)} or  \iref{(C$_3'$)}, respectively. Otherwise, $\der_0$ contains $\Rbox$ and we can construct a finite sequence of 0-derivations $\der_0, \der_1, \der_2,\dots,\der_k$ of reach at most $2|\varphi|$, where each $\der_i$ is a 0-derivation of $(q,n_i)$ containing $\Rbox$,  and such that $n_0 > n_1 > n_2 > \dots > n_k$.

Suppose we have already constructed $\der_i$.  Since $\Psi\Rightarrow^0 (q, n)$, for all $n$, we have $\Psi\Rightarrow^0 (q, n_i-1)$. By Lemma \ref{lem:short-chain},  there is a 0-derivation $\der$ of $(q, n_{i+1})$, for some $n_{i+1}$, with one of \iref{(C$_1$)}--\iref{(C$_3$)}.   If \iref{(C$_2$)} and $\der$ left-stutters or \iref{(C$_3$)} and $\der$ right-stutters then we obtain \iref{(C$_2'$)} or \iref{(C$_3'$)}, respectively.  If~\iref{(C$_2$)} and $\der$ contains $\Rbox$ with $n_{i+1} \leq n_i - 1$ then $\der$ becomes the next member $\der_{i+1}$ in the sequence.  If \iref{(C$_3$)} and $\der$ contains $\Lbox$ with $n_{i+1} \geq n_i - 1$ then $\der_i$ and $\der$ satisfy \iref{(C$_1'$)}. Otherwise, we have~\iref{(C$_1$)} with $n_{i+1} = n_i - 1$ (recall that $n_i >  -2|\varphi|$). Consider three cases. If $\der$ contains $\Rbox$ then $\der$ becomes the next member $\der_{i+1}$ in the sequence. If $\der$ contains $\Lbox$ then $\der_i$ and $\der$ satisfy \iref{(C$_1'$)}.   Otherwise, that is, if $\der$ contains neither $\Lbox$ nor $\Rbox$, we  must  have $n_{i+1} = 0$ and $p \to^* q$, for some $p\in\Psi$. Then we have $n_i = 1$ and, as $\der_i$ contains $\Rbox$, we can append $(q,1) \Rightarrow_{\iref{(R$_5$)}} (\Rbox q, 0)\Rightarrow_{\iref{(R$_4$)}}(\Rbox q, -1)\Rightarrow_{\iref{(R$_3$)}} (q, 0)$ to $\der$ to obtain the next member $\der_{i+1}$ in the sequence.

\smallskip
 
$(\Leftarrow)$ is left to the reader.    
\qed
\end{proof}

We are now in a position to prove the main result of this section.

\begin{theorem}\label{newstuff}
The satisfiability problem for $\coreLTL\Xbox$-formulas is in \NLogSpace.
\end{theorem}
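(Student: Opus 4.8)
The plan is to reduce the satisfiability test for a $\coreLTL\Xbox$-formula $\varphi$ to a polynomial family of reachability-style queries, each answerable by a nondeterministic logarithmic-space machine, and then to appeal to the closure of \NLogSpace{} under complementation. First I would apply Lemma~\ref{restricted} --- exactly as in the transformation preceding Theorem~\ref{thm:hornLTL} --- to assume, after a log-space preprocessing step, that $\varphi = \Psi \land \SVbox \Phi^+ \land \SVbox \Phi^-$, where $\Psi$ is a set of variables and $\Phi^+$, $\Phi^-$ consist of \emph{binary} positive and negative clauses of the form~\eqref{eq:horn-clauses} (with $k = 2$), built only from the literals $p$, $\Rbox p$ and $\Lbox p$. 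Put $K = |\varphi| + 4$. As observed just before Theorem~\ref{thm:hornLTL}, $\varphi$ is satisfiable iff \emph{no} negative clause $\neg\lambda_1 \lor \dots \lor \neg\lambda_k$ of $\Phi^-$ (so $k \le 2$) has all of $\lambda_1, \dots, \lambda_k$ true in the canonical model $\K$ at some moment $n$ with $|n| \le K$. Hence $\varphi$ is \emph{un}satisfiable precisely when there is a \emph{bad witness}: a negative clause of $\Phi^-$ together with such an $n$. Since $n$ ranges over only $O(|\varphi|)$ values and each clause has at most two literals, it will be enough to show that ``$\K, n \models \lambda$'' is decidable in \NLogSpace{} for a fixed literal $\lambda$ and a fixed $n$ with $|n| \le K$; then existence of a bad witness is in \NLogSpace, and so satisfiability --- being its complement --- is in \NLogSpace{} as well.

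The key ingredient is therefore the \NLogSpace{} test for ``$\K, n \models \lambda$''. By the soundness and completeness Lemmas~\ref{sound} and~\ref{complete} this holds iff $\Psi \Rightarrow^0 (\lambda, n)$ or $\Psi \Rightarrow^\forall \lambda$, and the machine will branch on which of the two to establish. To establish $\Psi \Rightarrow^0 (\lambda, n)$ it uses Lemma~\ref{lem:short-chain}: it guesses, one step at a time, a $0$-derivation $\der$ of some $(\lambda, m)$, holding in memory only the current pair $(\lambda_i, n_i)$ --- which fits in $O(\log|\varphi|)$ bits, since $\lambda_i$ is one of the literals occurring in $\varphi$ and, by the reach bound, $|n_i| \le 2|\varphi|$ --- together with a constant number of flags: whether a $\Rbox$- or an $\Lbox$-literal has already occurred (so that applications of~\iref{(R$_5$)} can be checked to be safe) and, when one aims at~\iref{(C$_2$)} or~\iref{(C$_3$)}, the data of a nondeterministically guessed left- or right-stuttering segment $\lambda_v = \lambda_w$ plus a flag monitoring the sign condition on the intermediate numbers. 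On termination the machine checks the appropriate case of~\iref{(C$_1$)}--\iref{(C$_3$)}, comparing $m$ with the fixed $n$ and consulting the box- and stutter-flags. Only the \emph{reach} of $\der$, not its length, is bounded, but this is immaterial: the configuration space of the walk is polynomial, so a nondeterministic space-bounded machine simply accepts as soon as it reaches a configuration witnessing the goal. To establish $\Psi \Rightarrow^\forall \lambda$, the machine first guesses a variable $q$, then verifies ``$\Psi \Rightarrow^0 (q, m)$ for all $m \in \Z$'' via Lemma~\ref{lem:univ-chain} --- i.e.\ by guessing the one or two bounded-reach $0$-derivations and checking~\iref{(C$_1'$)}--\iref{(C$_3'$)} exactly as above --- and finally verifies that there is a $\forall$-derivation of $\lambda$ from $q$. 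As $\forall$-derivations use only~\iref{(R$_1$)}--\iref{(R$_3$)} and~\iref{(R$_5$)} and their numbers can be ignored, this last check is plain reachability from $q$ to $\lambda$ in the (log-space describable) graph on literals whose edges are the instances of these rules, an~\iref{(R$_1$)}-step from $\lambda_1$ to $\lambda_2$ holding just when $\lambda_1 \to^* \lambda_2$, itself a reachability question in $\Phi^+$.

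Putting the pieces together, the decision algorithm guesses a negative clause of $\Phi^-$ and an $n$ with $|n| \le K$ (both of logarithmic size) and then runs, sequentially, the \NLogSpace{} test ``$\K, n \models \lambda_i$'' for each literal $\lambda_i$ of that clause, accepting iff all of them succeed. Running boundedly many \NLogSpace{} subroutines one after another, and nesting the $\forall$-derivation reachability inside the test for $\Psi \Rightarrow^\forall \lambda$, stays within \NLogSpace, so the bad-witness problem is in \NLogSpace{} and, by closure of \NLogSpace{} under complement, so is the satisfiability problem for $\coreLTL\Xbox$-formulas. I expect the main obstacle to be the bookkeeping in the $0$-derivation walk: one must make sure that everything that has to be remembered --- the current pair, the two ``box already seen'' flags, and the description of a guessed stuttering segment --- is of logarithmic size, which is exactly what the reach bounds of Lemmas~\ref{lem:short-chain} and~\ref{lem:univ-chain} secure; the unboundedness of derivation \emph{length} is a harmless distraction for a nondeterministic space-bounded traversal.
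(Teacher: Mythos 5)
Your proposal is correct and follows essentially the same route as the paper: reduce unsatisfiability to the existence of a negative clause violated in the canonical model at some $|n|\le K$, characterise truth in $\K$ via the $0$- and $\forall$-derivations of Lemmas~\ref{sound} and~\ref{complete}, use the reach bounds of Lemmas~\ref{lem:short-chain} and~\ref{lem:univ-chain} to keep the guessed derivation configurations logarithmic, and close with $\NLogSpace = \text{co-}\NLogSpace$. You merely make explicit the implementation details (the on-the-fly walk, the box and stutter flags, the complementation step) that the paper's terse proof leaves implicit.
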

\begin{proof}
An $\coreLTL\Xbox$-formula $\varphi = \Psi\land \SVbox \Phi^+ \land \SVbox \Phi^-$  
is unsatisfiable 
iff $\Phi^-$ contains a clause
$\neg \lambda_1 \lor \neg \lambda_2$ such that $\K, n \models \lambda_1\land \lambda_2$, for
some $n$ with $|n| \le K$. 
For each  $\neg \lambda_1 \lor \neg \lambda_2$ in $\Phi^-$, our algorithm guesses such an $n$ (in binary) and, for both $\lambda_1$ and $\lambda_2$, checks whether $\Psi\Rightarrow^0 (\lambda_i,n)$ or 
$\Psi \Rightarrow^\forall \lambda_i$, which, by Lemmas~\ref{lem:short-chain} and~\ref{lem:univ-chain},  requires only logarithmic space. 
\qed
\end{proof}

The initial clauses of $\coreLTL\Xbox$-formulas $\varphi$ are propositional variables.
If we slightly extend the language to allow for initial core-clauses (without $\SVbox$), 
then the satisfiability problem becomes \PTime-hard. This can be shown by reduction of satisfiability of
propositional Horn formulas with clauses of the form $p$, $\neg p$ and
$p\land q\to r$, which is known to be \PTime-complete. Indeed, suppose
$f=\bigwedge_{i=1}^n C_i$ is such a formula. We define a temporal
formula $\varphi_f$ to be the conjunction of all unary clauses of $f$
with the following formulas, for  each ternary clause $C_i$ of the form
$p \land q \to r$ in $f$:
\begin{equation*}
c_i \ \ \land \ \ \SVbox (p \to \Rbox c_i) \ \ \land \ \ \SVbox (q \to \Lbox c_i)  \ \ \land \ \  (\SVbox c_i \to r),
\end{equation*}
where $c_i$ is a fresh variable.
One can show that $f$ is
satisfiable iff $\varphi_f$ is satisfiable.


We finish this section by an observation that if the language allows for non-Horn clauses (e.g., $p \lor q$) then the satisfiability problem  becomes \NP-hard: 
 \begin{theorem}\label{krom-low-NP}
The satisfiability problem for $\kromLTL\Xbox$-formulas is \NP-hard.
\end{theorem}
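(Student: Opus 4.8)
The plan is to reduce an NP-complete problem to satisfiability of $\kromLTL\Xbox$-formulas. The natural candidate, hinted at in the introduction (``we can use the clauses $\neg p\lor\neg\Rbox q$ and $q\lor r$ to say that $p\to\Rdiamond r$, which allows one to encode 3-colourability''), is \textbf{graph 3-colourability}. So I would start from an undirected graph $G=(V,E)$ and build a $\kromLTL\Xbox$-formula $\varphi_G$ that is satisfiable iff $G$ is 3-colourable. The overall idea is to lay out, along the time axis, one ``block'' of three consecutive positions per vertex, use three colour-variables to mark which colour a vertex receives, enforce that exactly one colour is chosen in each block, and use the $p\to\Rdiamond r$ gadget across blocks to enforce that adjacent vertices get different colours.

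The key steps, in order, are as follows. First, fix an enumeration $v_1,\dots,v_k$ of $V$ and introduce, for each vertex $v_i$, propositional variables $r_i,g_i,b_i$ (``$v_i$ is red/green/blue''); also introduce a ``position marker'' variable that is true exactly at the moments $0,1,\dots$ corresponding to vertex $v_i$'s slot, so that the $\SVbox$-operators can be anchored to a definite region of the model. Second, write Krom clauses forcing each vertex to receive \emph{at least} one colour --- this is where the non-Horn binary clause $r_i\lor g_i\lor b_i$ would normally go, but since we need \emph{binary} clauses we instead chain it: e.g.\ $\mathit{col}_i \lor r_i$ together with helper clauses, or more cleanly encode ``at least one'' by a two-literal disjunction using a single auxiliary variable per vertex. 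Third, write binary clauses $\neg r_i\lor\neg g_i$, $\neg r_i\lor\neg b_i$, $\neg g_i\lor\neg b_i$ forcing \emph{at most} one colour. Fourth --- the crux --- for each edge $\{v_i,v_j\}\in E$ with $i<j$, encode ``$v_i$ and $v_j$ do not both get red'' by the Krom $\Box$-gadget: using a fresh variable $x$ place $x$ at $v_i$'s red-slot, use $\SVbox(\text{red-slot-of-}v_i\land r_i\to\Rbox \neg x)$-style clauses (split into binary pieces $\neg\alpha\lor\neg\Rbox x$ and the companion clause) so that if $r_i$ and $r_j$ both hold, a contradiction $x\land\neg x$ arises at $v_j$'s slot; repeat for green and blue. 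Assembling $\varphi_G$ as the conjunction of all these (polynomially many, each of constant size) and $\Psi$ = the position markers, one then checks: if $G$ is 3-colourable, the obvious interpretation (each vertex's colour variables set according to the colouring, helper variables set consistently) is a model; conversely any model of $\varphi_G$ reads off a proper 3-colouring.

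The main obstacle is the \textbf{binarity constraint}: the genuinely useful fact ``$p\to\Rdiamond r$'' really is binary ($\neg p\lor\neg\Rbox q$ and $q\lor r$), but the ``at least one colour'' condition $r_i\lor g_i\lor b_i$ is ternary, so I must replace three-colourability by something whose only non-binary requirement is routed through the $\Rdiamond$-gadget rather than through a direct ternary clause. One fix is to note that the $\Rdiamond$-gadget $p\to\Rdiamond r$ itself supplies a ternary disjunction ``$r$ holds at \emph{some} future point'', so I would use positions rather than colours: encode a vertex's colour by \emph{which} of three designated future positions carries the mark $r$, forcing ``some position is marked'' via exactly this gadget; the edge constraints then become ``the marked position of $v_i$ differs from that of $v_j$'', again expressible with binary $\Box$-clauses comparing positions. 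So the real work is designing the position bookkeeping so that every constraint is genuinely binary; once that layout is in place, the correctness proof (both directions) and the size/log-space bound are routine. I would present the gadget abstractly first --- ``for literals $\lambda,\mu$, the Krom clauses $\ldots$ express $\lambda\to\Rdiamond\mu$'' --- and then apply it mechanically, mirroring the style of the $\coreLTL\Xallop$-hardness proof (Theorem~\ref{lem:bin-ltl:core-diamond}).
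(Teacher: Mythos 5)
Your high-level plan is the paper's: reduce graph 3-colourability, use the Krom gadget $\neg p\lor\neg\Rbox q$ (i.e.\ $p\to\Rdiamond\neg q$) to force a choice among three time points, and turn the edge constraints into comparisons of chosen positions. Your last paragraph even identifies the essential correction — that colours must be encoded as \emph{positions} rather than as three propositional variables per vertex, since ``at least one of $r_i,g_i,b_i$'' is irreducibly ternary. (Your earlier suggestion of capturing it ``by a two-literal disjunction using a single auxiliary variable per vertex'' cannot work: that would compress a ternary disjunction into propositional 2-CNF; only the temporal $\Rdiamond$-gadget achieves it.)

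However, the construction is left unfinished at exactly the two places where the actual trick lies, and the concrete clauses you do write down are not Krom. First, a ``position marker true exactly at the moments of $v_i$'s slot'' is not available in a language with only $\Box$-operators and no $\Rnext$; the per-vertex blocks are also unnecessary. The paper instead uses one shared window $\{1,2,3\}$ delimited by the chain $p_0\land\bigwedge_{0\le i\le 3}\SVbox(p_i\to\Rbox p_{i+1})$, so that $p_4$ holds from moment $4$ on; no individual time point is ever named. Second, your edge gadget ``$\SVbox(\text{red-slot-of-}v_i\land r_i\to\Rbox\neg x)$'' has a conjunctive antecedent, hence is a ternary clause, and ``binary $\Box$-clauses comparing positions'' is asserted rather than constructed. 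The missing idea is to mark each vertex's chosen position by \emph{falsity} of a single variable $\overline{v}_i$: the clause $\SVbox(\neg p_0\lor\neg\Rbox\overline{v}_i)$ forces $\overline{v}_i$ to fail at some moment $>0$, the clause $\SVbox(p_4\to\overline{v}_i)$ confines that moment to $\{1,2,3\}$, and then ``adjacent vertices choose different positions'' is the single binary clause $\SVbox(\overline{v}_i\lor\overline{v}_j)$ — no auxiliary variable $x$, no per-colour repetition, and no ``at most one colour'' clauses are needed. Without this inversion there is no evident binary encoding of the edge constraint, so as it stands the reduction is not established.
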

\begin{proof}
By reduction of graph
3-colourability. Given a graph $G =(V,E)$,
consider the following $\kromLTL\Xbox$-formula $\varphi_G$ with
variables $p_0,\dots, p_4$ and $\overline{v}_i$, for $v_i\in V$:
\begin{multline*}
  p_0 \ \  \land \ \  \bigwedge\nolimits_{0 \leq i \leq 3} \SVbox (p_i \to \Rbox
  p_{i+1}) \  \ \land \ \ 
  \bigwedge\nolimits_{v_i\in V} \SVbox(p_0 \to \neg \Rbox
   \overline{v}_i) \ \ \land {} \\ \bigwedge\nolimits_{v_i\in V} \SVbox (p_4 \to  \overline{v}_i) \ \ \land \ \ 
    \bigwedge\nolimits_{(v_i,v_j)\in E} \SVbox(\overline{v}_i \lor
   \overline{v}_j).
 \end{multline*}%
Intuitively, the first four conjuncts of this formula  choose, for each vertex $v_i$ of the graph, a moment of time $1 \le n_i \leq 3$; the last conjunct makes sure that $n_i \ne n_j$ in case $v_i$ and $v_j$ are connected by an edge in $G$. We claim that $\varphi_G$ is satisfiable iff $G$ is 3-colourable. If $c\colon V \to \{1,2,3\}$ is a colouring of $G$ then set $\M,n \models  \overline{v}_i$ iff $c(v_i) \neq n$, for $v_i \in V$, and $\M,n \models p_i$ just in case $n\geq i$, for each $p_i$.
Clearly, $\M,0 \models \varphi_G$. Conversely, if $\M,0\models\varphi_G$ then, for each $v_i \in V$, there is $n_i \in \{1,2,3\}$ with $\M,n_i \models \neg \overline{v}_i$ and $\M,n_i
\models \overline{v}_j$ whenever $(v_i,v_j) \in E$. Thus, $c \colon
v_i \mapsto n_i$ is a colouring of $G$.
\qed
\end{proof}


\section{Conclusion}

We have investigated the computational complexity of the satisfiability problem for the fragments of \PTL{} over $(\Z,<)$ given by the form of the clauses---\textit{bool}, \textit{horn}, \textit{krom} and \textit{core}---in the clausal normal form and the temporal operators available for constructing temporal literals. Apart from $\boolLTL\Xallop$, whose formulas are equisatisfiable to formulas in the full \PTL, only $\hornLTL\Xallop$ has \PSpace-complete satisfiability. For all other fragments, the complexity varies from \NLogSpace{} to  \PTime{} and \NP.

The idea to consider sub-Boolean fragments of \PTL{} comes from description logic, where the {\sl DL-Lite}  family~\cite{CDLLR07,ACKZ:jair09} of logics has been designed and investigated with the aim of finding formalisms suitable for ontology-based data access (OBDA). It transpired that, despite their low complexity, {\sl DL-Lite} logics were capable of representing basic conceptual data modelling constructs~\cite{BeCD05-AIJ-2005,ACKRZ:er07}, and gave rise to the W3C standard ontology language {\sl OWL~2~QL} for OBDA. One possible application  of the results obtained in this paper lies in temporal conceptual modelling and temporal OBDA~\cite{AKRZ:ER10}. Temporal description logics (and  other many-dimensional logics) are notorious for their bad computational properties~\mbox{\cite{GKWZ03,LuWoZa-TIME-08}.} We believe, however, that efficient practical reasoning can be achieved by considering sub-Boolean temporal extensions of {\sl DL-Lite} logics; see~\cite{AKRZ:aaai10} for first promising results.



\end{document}